\documentclass[review, authoryear]{article}
\usepackage{graphicx}  
\usepackage{dcolumn}   
\usepackage{bm}        
\usepackage{amssymb}   
\usepackage{amsmath}   
\usepackage{mathrsfs}
\usepackage[normalem]{ulem}
\usepackage{color}
\usepackage{empheq}
\usepackage{hyperref}
\usepackage[modulo]{lineno}
\usepackage{float}
\usepackage{soul}
\usepackage{authblk}
\usepackage{amsthm}
\newtheorem{theorem}{Theorem}
\newtheorem{proposition}{Proposition}
\newtheorem{remark}{Remark}
\newtheorem{lemma}{Lemma}
\newtheorem{example}{Example}
\newtheorem{definition}{Definition}

\newcommand{\id}{\mathbb{I}}


\newcommand{\tr}{\mathrm{tr}}
\newcommand{\diag}{\mathrm{diag}}

\newcommand{\avg}{\mathbb{E}}

\newcommand{\cov}{\mathbb{C}}

\newcommand{\gauss}{\mathcal{N}}

\newcommand{\util}{\mathcal{U}}

\newcommand{\bq}{{\bm x}}
\newcommand{\bs}{{\bm s}}
\newcommand{\bw}{{\bm w}}
\newcommand{\bu}{{\bm y}}

\newcommand{\bv}{{\bm u}}
\newcommand{\bp}{{\bm p}}
\newcommand{\bpf}{\bm v}
\newcommand{\bmu}{{\bm \mu}}
\newcommand{\ba}{\bm \alpha}
\newcommand{\avgu}{\bm x_0}

\newcommand{\bsig}{\sqrt{\bm \sigma_0}}
\newcommand{\bom}{\sqrt{\bm \omega}}

\newcommand{\bO}{{\bm O}}
\newcommand{\bB}{{\bm B}}
\newcommand{\lam}{\bm \Lambda}

\newcommand{\siginf}{\bm\Sigma_0}
\newcommand{\hatsig}{\hat{\bm\Sigma}}
\newcommand{\sig}{\bm\Sigma}
\newcommand{\siginfleft}{ \mathcal{G}}
\newcommand{\siginfright}{ \mathcal{D}}
\newcommand{\om}{\bm\Omega}
\newcommand{\omleft}{ \mathcal{L}}
\newcommand{\omright}{ \mathcal{R}}
\newcommand{\hatomd}{\hat{\bm\Omega}^{\rm d}}
\newcommand{\omd}{{\bm\Omega}^{\rm d}}
\newcommand{\respinf}{\bm R_v}
\newcommand{\resp}{\bm R}
\newcommand{\respinfd}{{\bm R^{\rm d}_{v}}}
\newcommand{\hatrespinfd}{\hat{\bm R}^{\rm d}}
\newcommand{\respd}{{\bm R}^{\rm d}}
\newcommand{\hatrespd}{\hat{\bm R}^{\rm d}}
\newcommand{\comment}[1]{{{#1}}} 

\newcommand{\bV}{{\bm S}}
\newcommand{\bU}{\bm{W}}

\setlength{\textwidth}{170mm}
\setlength{\textheight}{22cm}
\setlength{\oddsidemargin}{-5mm}
\setlength{\topmargin}{-2cm}

\title{The Multivariate Kyle model: More is different}

\author[1]{L. C. Garc\'ia del Molino}
\author[1]{I. Mastromatteo}
\author[2,1]{M. Benzaquen}
\author[1]{J.-P. Bouchaud}
\affil[1]{Capital Fund Management, 23 rue de l'Universit\'e, 75007, Paris, France}
\affil[2]{Ladhyx and Department of Economics, UMR CNRS 7646, Ecole polytechnique, 91128 Palaiseau Cedex, France}
\setcounter{Maxaffil}{0}

\begin{document}

\maketitle

\begin{abstract}
We reconsider the multivariate Kyle model in a risk-neutral setting with a single, perfectly informed rational insider and a rational competitive market maker, setting the price of $n$ correlated securities. We prove the unicity of a symmetric, positive definite solution for the impact matrix and provide insights on its interpretation. We explore its implications from the perspective of empirical market microstructure, and argue that it provides a sensible inference procedure to cure some pathologies encountered in recent attempts to calibrate cross-impact matrices. As an illustration, we determine the empirical cross impact matrix of US Treasuries, and compare the results with recent alternative calibration methods.\\
\newline
\textit{Keywords}: Microstructure, Impact, Multivariate, Price formation 
\end{abstract}

\tableofcontents

\section{Introduction}

Understanding market impact -- the mechanism through which trades tend to push prices -- is with no doubt a venture of paramount importance. From the theoretical point of view, market impact lies at the very heart of price formation in financial markets. From the practitioners perspective, market impact is often at the origin of non negligible trading costs that need to be controlled to optimise  execution strategies.  In the past decades most of the literature has focused on the price impact of single products (for a recent review, see \cite{TQP}), with no regard of inter-asset interactions. However, many market participants trade large portfolios that combine hundreds or thousands of assets. Thus addressing the matter of inter-asset price impact, coined \emph{cross-impact}, is one of great interest both fundamentally and practically. Recent empirical studies show evidence of significant cross-impact effects in stock markets \cite{hasbrouck2001common,pasquariello2013strategic,wang2016cross,benzaquen2017dissecting}. A recurrent issue is that of empirical noise when it comes to large matrix estimation. It is thus important to continue to search for good statistical priors in order to help ``clean'' these large dimensional estimators.\\
 
What should one expect from theoretical economics on this matter? Which empirical observations should be considered unusual? Here we focus on how to make sense of the empirical observations accumulated over the past few years in the most orthodox setting: the classic Kyle model \cite{kyle1985continuous}, extended to a multi-asset framework. Our aim is to gain insight into (i) how information is diffused into prices (cross-sectionally) within the Kyle setting  and (ii) how one can use such results to regularise the very noisy regressions that arise in empirical cross-impact analysis. The multivariate Kyle model was first considered in \cite{caballe1994imperfect}, in a very general setting with $n$ assets and $m$ partially informed traders. The generality of the model is such that only a partial analysis of the solution is possible; furthermore, it involves an object that cannot be measured directly on empirical data, namely the correlation of the order flow of informed traders. Here we reconsider the problem with the issue of empirical calibration in mind. We focus on the particular case $m=1$ and provide a proof that the linear equilibrium is unique and characterized by a symmetric, positive definite (SPD) impact matrix. This solution provides an explicit recipe to infer cross impact from  (cleaned) order flow and return correlation matrices, that we compare to other natural recipes, such as Maximum Likelihood Estimators (MLE) \cite{benzaquen2017dissecting} or the recently proposed ``EigenLiquidity'' Model (ELM) \cite{mastromatteo2017trading}. \\

The paper is organised as follows. In sections~\ref{sec:problem} and \ref{sec:sol} we introduce the multivariate Kyle model and provide the equilibrium strategies, mostly building upon the work in \cite{caballe1994imperfect} but also providing new results. In section~\ref{sec:interpretation} we study in detail the mathematical properties of the solution. In section~\ref{sec:implications}, we introduce an impact estimator based on the equilibrium strategy of the market maker and compare it to other usual estimators both from the theoretical and empirical point of view. Along the text we present several examples intended to provide intuition behind the main results. In particular it is often interesting to confront the results of the multivariate and univariate models.
\\

In all the following bold uppercase symbols denote matrices, bold lower cases denote vectors and light lower cases denote scalars.

\section{The Multivariate Kyle Model}
\label{sec:problem}

In this section, we present in detail the multivariate Kyle setting and define the observables that would allow one to calibrate the model using empirical observables.  

\subsection{The Model}

Consider a single-period economy where three representative agents trade $n$ instruments. The agents are: an informed trader (IT) who has perfect information about the future prices $\bpf$, a noise trader (NT) that trades in absence of any information due to exogenous reasons, and a competitive market maker (MM) that has the role of enforcing price efficiency.
The dynamics of the model is set by the following rules.
\begin{enumerate}
\item \comment{A fundamental price $\bpf$ is sampled from a Gaussian distribution $\bpf \sim \gauss(\bp_0,\siginf)$ where $\siginf$ is SPD}. Only the IT knows the value of $\bpf$ in advance, while $\bp_0$ and $\siginf$ are common knowledge. \comment{We denote the price deviation from its mean as $\Delta \bpf := \bpf - \bp_0$.}
\item The IT and NT place simultaneously their orders of sizes $\bq$ an $\bv$ respectively. The bids of the NT $\bv$ are sampled from a Gaussian distribution $\bv \sim \gauss(0,\om)$ \comment{independent of the fundamental price}, where $\om$ is an invertible matrix.
\item The MM clears the excess demand $\bu$ at a clearing price $\bp$ based on the total observed order imbalance $\bu = \bq + \bv$, that allows him/her to form the best estimation of the fundamental prices. These fundamental prices are then revealed.
\end{enumerate}
The quantity $\bq$ requested by the {\it risk-neutral} IT is such that he maximises the expectation of his utility function:
\begin{equation}
 \util_{IT}(\bq,\bp) = \bq^\top  (\bpf - \bp) \, . \label{utilIT}
\end{equation}
Note that $\util_{IT}$ does not contain any risk penalty. While the introduction of such a penalty would affect some of the conclusions below, we decide to leave this interesting issue for future work. 
Consistent with the assumption that market making is competitive, the MM sets a price that matches in expectation $\bpf$ given the available public information, namely the total order imbalance $\bu$:
\begin{equation}
  \label{eq:MMcondition}
  \bp = \avg[\bpf|\bu]\, ,
\end{equation}
where $\avg[\cdot]$ denotes average with respect to the distribution of $\bpf$ and $\bv$.

Clearly, the above setting is highly stylized and, on many counts, unrealistic.\footnote{For example, the order flow in a Kyle setting has no temporal auto-correlations, whereas it is well known that the empirical order flow has long memory. See \cite{TQP} and \cite{benzaquen2017dissecting} for a recent discussion in the multivariate context.} Still, this model is able to capture some of the essential ingredients of a reasonable price-formation process: the information owned by ITs gets encoded into a trading order imbalance $\bq$ polluted by a noise $\bv$. In a competitive regime, MMs are expected to decode the information contained in the total order imbalance $\bu = \bq + \bv$, in order to provide the best possible prediction of the fundamental price $\bpf$. This mechanically induces market impact: because the order imbalance $\bu$ is correlated with the fundamental price, the traded price $\bp$ will also display a correlation with the order imbalance $\bu$, by that providing a sensible measure of market impact.

\subsection{Observables}

In order to gain insight on the implications of this model and in order to make testable predictions, one is required to provide some metrics that can be compared against market data. Luckily enough, the Gaussian nature of the setup allows one to only consider first- and second-order statistics (means and covariances) of prices and volumes in order to completely characterize the behavior of the model. Naturally, the fundamental parameters defining the model ($\om, \siginf$) are not directly observable, and need to be inferred from the statistics of trades prices $\bp$, and of volumes $\bu$, which are the only physical observables of the model.
\paragraph{Prices}
Due to the price efficiency condition (see Eq.~\eqref{eq:MMcondition}), the average traded price is equal to the \comment{average} of the fundamental price, itself equal (for consistency) to the initial price:
\begin{equation}
  \avg[\bp] = \avg[\bpf] = \bp_0 \, .
\end{equation}
However, the covariance of the traded price and that of the fundamental price have no reason to coincide:
\begin{align}
  \sig &= \cov[\bp,\bp] = \avg[(\bp - \bp_0)(\bp - \bp_0)^\top]\, ,\\
  \siginf &= \cov[\bpf,\bpf]= \avg[(\bpf - \bp_0)(\bpf - \bp_0)^\top] \, .
\end{align}
Further down we show that, at equilibrium, $\sig$ and $\siginf$ are exactly proportional (see Eq.~\eqref{eq:fund_vs_efficient}).
 
\paragraph{Volumes}
Due to the non-informed nature of the NT, the average order imbalance is fixed by the bias introduced by the IT, so that:
\begin{equation}
  \avg[\bu] = \avg[\bq] := \avgu \, .
\end{equation}
The relation between the portion of volume covariance due to the NT and the one due to the IT is more subtle, and is given by:
\begin{equation}
 \omd =  \cov[\bu,\bu] = \cov[\bq,\bq] + \om \, ,
\end{equation}
where we have introduced what we have coined the \emph{dressed volume covariance} $\omd$, which is the physically observable quantity. Correspondingly, the \emph{bare} volume covariance (i.e. not dressed by the noise contribution $\om$) is $\cov[x,x]$.

\paragraph{Response}
Ultimately, we are interested in characterising the expected price changes conditionally to a given trade imbalance: the response function, directly related to price impact. Within this model, the
physical observables measuring such quantities\comment{, that we coin \emph{dressed responses},} are:
\begin{align}
 \respd &= \avg[(\bp - \bp_0)\bu^\top] = \cov[\bp,\bu] \, , \\
 \respinfd &= \avg[(\bpf - \bp_0)\bu^\top] = \cov[\bpf,\bu] \, .
\end{align}
Due to the absence of correlations between the non-informed order imbalance $\bv$ and $\bpf$, one can relate the dressed responses with the \comment{responses with respect to $\bq$: $\resp$ and $\respinf$, that we coin \emph{bare responses},}  through
\comment{
\begin{align}
 \label{eq:dress_vs_bare_resp}
 \resp &= \avg[(\bp - \bp_0)\bq^\top]= \cov[\bp,\bq]= \respd - \cov[\bp,\bv] \, , \\
 \respinf &= \avg[(\bpf - \bp_0)\bq^\top]= \cov[\bpf,\bq] = \respinfd \, .
\end{align}}
Interestingly, while the fundamental prices are insensitive to the level of noise trading, the traded prices incorporate some degree of mispricing due to the spurious correlations
between the total order imbalance $\bu$ and the fundamental prices $\bpf$.


\section{Equilibrium strategies}\label{sec:sol}

In this section we characterise the equilibrium strategies of the multivariate Kyle model. Although most of the results can be inferred from the seminal work of Caball\'e and Krishnan \cite{caballe1994imperfect}, we believe it is useful to provide a streamlined version of our own proofs, which we present in a more pedagogical and in some cases more compact form and within which several theoretical issues and empirical implications can be discussed explicitly.

\subsection{Linear equilibrium}
Our main result is the existence of a unique linear equilibrium as defined below, whose structure can be expressed analytically.
\comment{\begin{definition}[Linear equilibrium] By \emph{linear equilibrium}, we mean a set of strategies in which the MM fixes the traded prices $\bp$ and the IT fixes their bid $\bq$ by means of linear rules
\begin{align}
  \label{eq:pricing_rule}
  \bp &= \bmu + \lam \bu \, ,\\
  \bq &= \ba + \bB \bpf \, ,
\end{align}
such that the two following conditions are satisfied:
\begin{enumerate}
 \item \emph{Profit maximization}: for all alternative strategies with $\bq'\neq\bq$,
\[\avg[\util_{\rm IT}(\bq,\bp)|\bpf]>\avg[\util_{\rm IT}(\bq',\bp)|\bpf].\]
 \item \emph{Price efficiency}: The price $\bp$ satisfies \eqref{eq:MMcondition}.
\end{enumerate}
\end{definition}}
Based on the assumption of a linear strategy for the MM as given in Eq.~\eqref{eq:pricing_rule}, we obtain the following results (proofs are provided in App.~\ref{app:solution}).
\begin{proposition}\label{th:IT_strategy}
Imposing the linear pricing rule~\eqref{eq:pricing_rule} for the MM implies that a rational IT will also set the order imbalance $\bq$ as a linear function of the imbalance, with:
\begin{equation}\label{eq:ITstrategy}
  \bq = \frac 1 2 \lam_{\rm S}^{-1} (\bpf - \bmu) \, ,
\end{equation}
where $\lam_{\rm S}$ denotes the symmetric part of $\lam$. Furthermore, the profit maximization condition for the IT's strategy implies that $\lam_{\rm S}$ has to be positive-definite (PD).
\end{proposition}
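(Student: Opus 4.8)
The plan is to reduce the IT's decision to the maximization of a concave quadratic form in $\bq$. First I would insert the linear pricing rule~\eqref{eq:pricing_rule} into the utility~\eqref{utilIT}, writing $\bp = \bmu + \lam\bu = \bmu + \lam(\bq+\bv)$, so that $\util_{\rm IT}(\bq,\bp) = \bq^\top\!\big(\bpf - \bmu - \lam(\bq+\bv)\big)$. The IT chooses $\bq$ knowing $\bpf$ but not the noise-trader demand $\bv$, so the relevant objective is the conditional expectation $\avg[\util_{\rm IT}|\bpf]$. Since $\bv\sim\gauss(0,\om)$ is independent of $\bpf$ and $\bq$ is a function of $\bpf$ alone, the term linear in $\bv$ averages to zero, leaving
\begin{equation}
\avg[\util_{\rm IT}|\bpf] = \bq^\top(\bpf - \bmu) - \bq^\top\lam\bq \, .
\end{equation}

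The key observation is that only the symmetric part of $\lam$ survives in the quadratic term, since $\bq^\top\lam\bq = \bq^\top\lam_{\rm S}\bq$ with $\lam_{\rm S} = \tfrac12(\lam+\lam^\top)$. Differentiating with respect to $\bq$ then gives the first-order condition $(\bpf-\bmu) - 2\lam_{\rm S}\bq = 0$, which yields the claimed strategy~\eqref{eq:ITstrategy} as soon as $\lam_{\rm S}$ is invertible; the Hessian of the objective is the constant matrix $-2\lam_{\rm S}$, so this critical point is the unique global maximum precisely when $\lam_{\rm S}$ is positive definite.

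For the positive-definiteness claim I would argue \emph{necessity} directly from the requirement that a profit-maximizing $\bq$ exist at all, rather than treating PD as a second-order condition at an assumed interior critical point. Suppose $\lam_{\rm S}$ has an eigenvalue $c\le 0$ with unit eigenvector $\bw$. Taking $\bq = t\,\bw$ gives $\avg[\util_{\rm IT}|\bpf] = t\,\bw^\top(\bpf-\bmu) - c\,t^2$, which tends to $+\infty$ as $|t|\to\infty$ when $c<0$, and is likewise unbounded (for generic $\bpf$) when $c=0$. In either case no finite maximizer exists, so no linear equilibrium can form; the profit-maximization condition of the linear-equilibrium definition therefore forces $\lam_{\rm S}\succ 0$, which simultaneously guarantees the invertibility used above and that~\eqref{eq:ITstrategy} is the unique optimum.

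I expect the computation itself to be entirely routine. The two points that require care are the reduction of the quadratic term to its symmetric part $\lam_{\rm S}$, and the logical direction of the final step: the argument must show that positive-definiteness is \emph{compelled} by the mere existence of a finite, unique maximizer, since otherwise the insider could extract unbounded expected profit and the equilibrium would fail to exist.
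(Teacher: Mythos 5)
Your proof is correct and follows essentially the same route as the paper's: condition on $\bpf$, average out the independent noise $\bv$, reduce the quadratic term to $\lam_{\rm S}$, and combine the first-order condition with the requirement that the Hessian $-2\lam_{\rm S}$ be negative definite. Your explicit unboundedness argument along a non-positive eigendirection (including the degenerate $c=0$ case) is just a spelled-out version of the paper's second-order condition, and in fact mirrors the paper's own ``saddle point solutions'' example showing that $\avg[\util_{IT}]$ can be made arbitrarily large when $\lam$ is not PD.
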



Eq.~\eqref{eq:pricing_rule} shows that $\lam$ plays the role of adjusting the traded price level $\bp$ proportionally to the imbalance $\bu$. Proposition \ref{th:IT_strategy} shows that $\lam$ also plays the role of setting the order imbalance from the informed trader $\bq$ given the knowledge of the fundamental price $\bpf$. Moreover, Eq.~\eqref{eq:ITstrategy} together with Eq.~\eqref{eq:pricing_rule} imply that both the order imbalance $\bu$ and the traded price $\bp$ are normally distributed random variables, due to stability of the Gaussian distribution under convolution. This last property is at the core of the following proposition.

\begin{proposition}
\label{prop:MM_strategy}
Assuming a linear strategy for the IT as in Eq.~\eqref{eq:ITstrategy} implies that the parameters of the pricing rule of Eq.~\eqref{eq:pricing_rule} for the MM are given by:
\begin{equation}\label{eq:MM_params}
\begin{aligned}
 \bmu &= \bp_0 - \lam \bu_0 \, , \\
\bm\Lambda & = \respinfd (\omd)^{-1}.
\end{aligned}
\end{equation}
\end{proposition}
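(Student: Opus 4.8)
The plan is to exploit the fact that, once the IT adopts the linear strategy of Proposition~\ref{th:IT_strategy}, the pair $(\bpf,\bu)$ becomes jointly Gaussian. The price-efficiency condition~\eqref{eq:MMcondition} can then be evaluated in closed form and matched term-by-term against the postulated affine pricing rule~\eqref{eq:pricing_rule}, which pins down $\bmu$ and $\lam$ uniquely.

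First I would establish joint Gaussianity. By assumption $\bpf\sim\gauss(\bp_0,\siginf)$, the IT's bid $\bq=\tfrac12\lam_{\rm S}^{-1}(\bpf-\bmu)$ is an affine function of $\bpf$, and the noise $\bv\sim\gauss(0,\om)$ is Gaussian and independent of $\bpf$. Hence $\bu=\bq+\bv$ is an affine function of the jointly Gaussian vector $(\bpf,\bv)$, so $(\bpf,\bu)$ is jointly Gaussian with means $\avg[\bpf]=\bp_0$ and $\avg[\bu]=:\bu_0$. Next I would invoke the standard Gaussian conditioning identity
\begin{equation*}
\avg[\bpf\,|\,\bu]=\avg[\bpf]+\cov[\bpf,\bu]\,\cov[\bu,\bu]^{-1}\bigl(\bu-\avg[\bu]\bigr),
\end{equation*}
and identify the covariance blocks with the observables defined in Section~\ref{sec:problem}, namely $\cov[\bpf,\bu]=\respinfd$ and $\cov[\bu,\bu]=\omd$. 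This gives $\avg[\bpf\,|\,\bu]=\bp_0+\respinfd(\omd)^{-1}(\bu-\bu_0)$.

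Finally I would impose the efficiency condition $\bp=\avg[\bpf\,|\,\bu]$ and equate it, as an identity in $\bu$, with the linear rule $\bp=\bmu+\lam\bu$. Matching the coefficient of $\bu$ yields $\lam=\respinfd(\omd)^{-1}$, while matching the constant term yields $\bmu=\bp_0-\lam\bu_0$, which is exactly~\eqref{eq:MM_params}.

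The only genuinely non-routine point is to ensure the conditioning formula is well posed, i.e. that $\omd=\cov[\bu,\bu]$ is invertible. This follows because $\om$ is an invertible covariance matrix, hence positive definite, and $\omd=\cov[\bq,\bq]+\om$ adds to it the positive-semidefinite matrix $\cov[\bq,\bq]$; the sum is therefore positive definite and invertible. I would also flag that the expressions for $\bmu$, $\respinfd$, and $\omd$ are mutually self-referential (the IT's response to $\bmu$ determines $\bq$, which feeds back into $\respinfd$ and $\omd$, which in turn fix $\bmu$); resolving that fixed point is precisely the content of the uniqueness theorem, whereas here it suffices that, for a \emph{given} linear IT strategy, the two matching conditions above determine the MM's affine parameters uniquely.
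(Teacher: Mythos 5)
Your proposal is correct and follows essentially the same route as the paper's own proof in App.~\ref{app:mm_solution}: joint Gaussianity of $(\bpf,\bu)$, the standard Gaussian conditioning identity, identification of the covariance blocks with $\respinfd$ and $\omd$, and term-by-term matching against the affine rule~\eqref{eq:pricing_rule}. Your explicit verification that $\omd = \cov[\bq,\bq] + \om$ is positive definite (hence invertible), and your remark that the self-referential dependence of $\bu_0$, $\respinfd$, $\omd$ on the IT's strategy is resolved only by the later fixed-point analysis, are both points the paper leaves implicit or defers to the surrounding discussion.
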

Note that $\bu_0$, $\respinfd$ and $\omd$ \comment{depend on} $\bq$, and therefore of $\lam_{\rm S}$. Substituting \eqref{eq:MM_params} into Eq.~\eqref{eq:ITstrategy}  allows to close the system and find an equation for $\lam$. Note however that the as obtained system has many possible solutions. To constrain the latter we must impose the profit maximization condition introduced in Proposition \ref{th:IT_strategy}.

\begin{proposition}
\label{prop:symmetry}
Assume that there exists a solution to the utility maximization problem of the IT of the form given by Eq.~\eqref{eq:ITstrategy}, and to the pricing rule for the MM given by Eqs.~\eqref{eq:pricing_rule} and \eqref{eq:MM_params}. Then, the profit maximization condition that $\lam_{\rm S}$ has to be PD implies that $\lam$ is symmetric and satisfies the equation:
\begin{equation}\label{eq:core_lam_Sym}
 \frac14\siginf = \lam\om\lam\, .
\end{equation}

\end{proposition}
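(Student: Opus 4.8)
The plan is to collapse the fixed-point relations of Propositions~\ref{th:IT_strategy} and~\ref{prop:MM_strategy} into a single matrix equation for $\lam$, and then to use the positive-definiteness of $\lam_{\rm S}$ to rule out any antisymmetric component. First I would read the IT rule~\eqref{eq:ITstrategy} as $\bq=\bB\,\bpf+\text{const}$ with $\bB=\tfrac12\lam_{\rm S}^{-1}$, and use the independence of the noise flow $\bv$ from $\bpf$ to compute the observable second moments entering~\eqref{eq:MM_params}:
\begin{equation}
 \respinfd=\cov[\bpf,\bu]=\siginf\bB^\top=\tfrac12\siginf\lam_{\rm S}^{-1},\qquad \omd=\cov[\bu,\bu]=\bB\siginf\bB^\top+\om=\tfrac14\lam_{\rm S}^{-1}\siginf\lam_{\rm S}^{-1}+\om,
\end{equation}
where I used $\lam_{\rm S}^\top=\lam_{\rm S}$. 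Since $\siginf$ and $\om$ are SPD, $\omd$ is SPD and hence invertible, so $\lam=\respinfd(\omd)^{-1}$ is well defined; writing it as $\lam\,\omd=\respinfd$ and multiplying on the right by $\lam_{\rm S}$ yields the closed self-consistency equation
\begin{equation}
 \tfrac12\siginf=\tfrac14\,\lam\,\lam_{\rm S}^{-1}\siginf+\lam\,\om\,\lam_{\rm S}. \label{eq:selfcons}
\end{equation}
(The mean part $\bmu=\bp_0-\lam\bu_0$ of~\eqref{eq:MM_params} involves only centred-out first moments and plays no role here.)

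Next I would split $\lam=\lam_{\rm S}+\bm A$ into its symmetric part and an antisymmetric remainder $\bm A=-\bm A^\top$ and substitute into~\eqref{eq:selfcons}, which gives
\begin{equation}
 \tfrac14\siginf=\tfrac14\,\bm A\,\lam_{\rm S}^{-1}\siginf+\lam_{\rm S}\,\om\,\lam_{\rm S}+\bm A\,\om\,\lam_{\rm S}. \label{eq:dag}
\end{equation}
Because the left-hand side is symmetric, so is the right-hand side; equating the right-hand side to its transpose and using $\siginf^\top=\siginf$, $\om^\top=\om$, $\lam_{\rm S}^\top=\lam_{\rm S}$, $\bm A^\top=-\bm A$, the symmetric term $\lam_{\rm S}\om\lam_{\rm S}$ cancels and I am left with a homogeneous linear equation for $\bm A$ alone,
\begin{equation}
 \tfrac14\big(\bm A\,\lam_{\rm S}^{-1}\siginf+\siginf\,\lam_{\rm S}^{-1}\bm A\big)+\bm A\,\om\,\lam_{\rm S}+\lam_{\rm S}\,\om\,\bm A=0. \label{eq:Aeq}
\end{equation}

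The crux — and the step I expect to be the main obstacle — is to show that~\eqref{eq:Aeq} forces $\bm A=0$, which is exactly where positive-definiteness is used. Since $\lam_{\rm S}$ is PD it admits a symmetric PD square root $\lam_{\rm S}^{1/2}$, and I would conjugate~\eqref{eq:Aeq} by $\lam_{\rm S}^{-1/2}$ on both sides. Setting $\tilde{\bm A}=\lam_{\rm S}^{-1/2}\bm A\,\lam_{\rm S}^{-1/2}$ (still antisymmetric), $\tilde\siginf=\lam_{\rm S}^{-1/2}\siginf\,\lam_{\rm S}^{-1/2}$ and $\tilde\om=\lam_{\rm S}^{1/2}\om\,\lam_{\rm S}^{1/2}$ (both SPD), a short computation that repeatedly inserts $\lam_{\rm S}^{-1/2}\lam_{\rm S}^{-1/2}=\lam_{\rm S}^{-1}$ collapses~\eqref{eq:Aeq} into the Lyapunov form
\begin{equation}
 \bm N\,\tilde{\bm A}+\tilde{\bm A}\,\bm N=0,\qquad \bm N:=\tfrac14\tilde\siginf+\tilde\om. \label{eq:lyap}
\end{equation}
The delicate point is precisely that the congruence by $\lam_{\rm S}^{\pm1/2}$ must leave the \emph{same} matrix $\bm N$ multiplying $\tilde{\bm A}$ on both sides; once it does, one invokes the standard fact that for SPD $\bm N$ the operator $\bm X\mapsto\bm N\bm X+\bm X\bm N$ has eigenvalues $\mu_i+\mu_j>0$ (the $\mu_i>0$ being the eigenvalues of $\bm N$) and is therefore invertible, so~\eqref{eq:lyap} admits only $\tilde{\bm A}=0$, whence $\bm A=0$ and $\lam=\lam_{\rm S}$ is symmetric. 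Substituting $\bm A=0$ back into~\eqref{eq:dag} then leaves $\tfrac14\siginf=\lam_{\rm S}\om\lam_{\rm S}=\lam\om\lam$, which is exactly~\eqref{eq:core_lam_Sym}.
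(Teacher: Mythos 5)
Your proof is correct, and at the decisive step it takes a genuinely different route from the paper's. Both arguments begin identically: you compute $\respinfd = \tfrac12\siginf\lam_{\rm S}^{-1}$ and $\omd = \tfrac14\lam_{\rm S}^{-1}\siginf\lam_{\rm S}^{-1} + \om$ and close the system into the same quadratic relation, namely the paper's Eq.~\eqref{eq:lambdafirst} (and you are right that the mean part $\bmu$ is irrelevant for the covariance equation). The paper then solves \emph{explicitly} for the antisymmetric part, $\lam_{\rm A} = (\id - \bm\Gamma)(\id + \bm\Gamma)^{-1}\lam_{\rm S}$ with $\bm\Gamma = 4\lam_{\rm S}\om\lam_{\rm S}\siginf^{-1}$, imposes $\lam_{\rm A}^\top = -\lam_{\rm A}$, and massages the constraint into $\bm Y^2 = \bm Z^2$ with $\bm Y = \sqrt{\lam_{\rm S}}\,\om\sqrt{\lam_{\rm S}}$ and $\bm Z = \sqrt{\lam_{\rm S}^{-1}}\,\tfrac14\siginf\sqrt{\lam_{\rm S}^{-1}}$, concluding $\bm Y = \bm Z$ from the uniqueness of the PD square root (Lemma~\ref{lemma:pos_def_sqrt}); Eq.~\eqref{eq:lambdaS} comes first and $\lam_{\rm A}=0$ follows. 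You never solve for $\bm A$: you extract the antisymmetric component of the closed equation, which is a \emph{homogeneous linear} (Sylvester-type) equation in $\bm A$, and collapse it by congruence with $\lam_{\rm S}^{\pm 1/2}$ to $\bm N\tilde{\bm A} + \tilde{\bm A}\bm N = 0$ with $\bm N = \tfrac14\tilde\siginf + \tilde\om$ SPD, whence $\tilde{\bm A}=0$ because the Lyapunov operator has spectrum $\{\mu_i+\mu_j\}\subset(0,\infty)$. I verified the congruence term by term: all four terms do transform consistently into $\tilde{\bm A}\tilde\siginf$, $\tilde\siginf\tilde{\bm A}$, $\tilde{\bm A}\tilde\om$, $\tilde\om\tilde{\bm A}$, so the same $\bm N$ genuinely appears on both sides — the point you flagged as delicate works out. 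Comparing the two: your route is purely linear in $\bm A$, pinpoints exactly where PDness of $\lam_{\rm S}$ enters (forming $\lam_{\rm S}^{\pm1/2}$ and making $\bm N$ SPD), and sidesteps a step the paper leaves implicit, namely the invertibility of $\id + \bm\Gamma$ (true since $\bm\Gamma$ is a product of two SPD matrices and hence has positive spectrum, but it deserves a remark). The paper's route, in exchange, produces the quadratic equation \eqref{eq:lambdaS} and the symmetry simultaneously, and reuses Lemma~\ref{lemma:pos_def_sqrt}, which is needed anyway for the explicit solution in Proposition~\ref{prop:unique_sym}. Two minor observations: antisymmetry of $\tilde{\bm A}$ is not actually needed in the final Lyapunov step (the operator $\bm X \mapsto \bm N\bm X + \bm X\bm N$ annihilates no nonzero matrix), though it is what made your equation homogeneous in the first place; and your closing substitution of $\bm A = 0$ into the full closed equation — not merely its antisymmetric part — correctly recovers $\tfrac14\siginf = \lam\om\lam$, i.e.\ Eq.~\eqref{eq:core_lam_Sym}, so no information is lost by the symmetrization.
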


The symmetry of $\lam$ leads to a unique solution that  can be expressed in terms of the parameters of the problem (i.e.~$\siginf$ and $\om$). \comment{Denoting by $\sqrt{\bm Y}$ the unique (see Lemma~\ref{lemma:pos_def_sqrt} in  App.~\ref{app:sym_lin_eq})
  PD solution of the matrix equation $\bm X\bm X = \bm Y$ for a SPD $\bm Y$ we introduce the following theorem:}

\begin{theorem}[Existence and unicity of the linear equilibrium]
  \label{th:main_theorem}
  There exists a unique linear equilibrium given by the strategies~\eqref{eq:pricing_rule} and~\eqref{eq:ITstrategy} where:
  \begin{eqnarray}
    \label{eq:lambda}
    \lam &=&\textstyle{\frac 1 2} \omright^{-1} \sqrt{\omright \siginf \omleft} \omleft^{-1} \, ,  \\
    \bu_0 &=& 0 \, , \\
    \label{eq:mu}   \bmu &=& \bp_0\, .
  \end{eqnarray}
  Here $\omleft,\omright$ are a factorisation of the matrix $\om = \omleft \omright$ satisfying $\omleft = \omright^\top$.
\end{theorem}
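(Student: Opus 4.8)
The plan is to regard Propositions~\ref{th:IT_strategy}--\ref{prop:symmetry} as having already reduced the construction of a linear equilibrium to a single matrix equation, and then to solve that equation explicitly by a congruence transformation that turns it into a matrix square-root problem. Indeed, those propositions guarantee that any linear equilibrium must have $\lam$ symmetric and positive-definite and satisfying the core equation $\tfrac14\siginf = \lam\om\lam$, while Proposition~\ref{prop:MM_strategy} fixes $\bmu = \bp_0 - \lam\bu_0$. It therefore suffices to (i) show that the core equation has a unique symmetric PD solution, given by Eq.~\eqref{eq:lambda}, and (ii) close the self-consistency determining $\bu_0$ and $\bmu$.

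For step (i) I would fix a factorisation $\om = \omleft\omright$ with $\omleft = \omright^\top$, which exists because $\om$, being the covariance matrix of the noise trades, is SPD. Introduce the change of variables $\bm Z := \omright\lam\omright^\top$; since $\omright$ is invertible this is a bijection that preserves symmetry and positive-definiteness (it is a congruence). Multiplying the core equation on the left by $\omright$ and on the right by $\omright^\top$ and using $\omright^\top\omright = \omleft\omright = \om$, the equation becomes $(2\bm Z)^2 = \omright\siginf\omright^\top =: \bm Y$, with $\bm Y$ SPD. By the uniqueness of the PD matrix square root (Lemma~\ref{lemma:pos_def_sqrt}) this forces $2\bm Z = \sqrt{\bm Y}$, i.e. $\bm Z = \tfrac12\sqrt{\bm Y}$; undoing the change of variables gives the unique candidate $\lam = \tfrac12\omright^{-1}\sqrt{\omright\siginf\omleft}\,\omleft^{-1}$, which is exactly Eq.~\eqref{eq:lambda}. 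Because $\bm Y$ is symmetric, so is $\sqrt{\bm Y}$, and $\lam = \tfrac12\omright^{-1}\sqrt{\bm Y}(\omright^{-1})^\top$ is then a congruence of the PD matrix $\sqrt{\bm Y}$; hence $\lam$ is itself symmetric PD, so the requirement of Proposition~\ref{th:IT_strategy} is met and this $\lam$ genuinely produces an equilibrium rather than merely a formal solution.

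For step (ii) I would take the expectation of the IT strategy~\eqref{eq:ITstrategy}. Since $\lam$ is symmetric one has $\lam_{\rm S} = \lam$, so, using $\avg[\bpf] = \bp_0$ and $\bu_0 = \avg[\bu] = \avg[\bq]$, it follows that $\bu_0 = \tfrac12\lam^{-1}(\bp_0 - \bmu)$. Substituting $\bmu = \bp_0 - \lam\bu_0$ collapses this to $\bu_0 = \tfrac12\bu_0$, whence $\bu_0 = 0$ and consequently $\bmu = \bp_0$, matching the remaining claims of the theorem.

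The step I expect to be the main obstacle is the core equation $\tfrac14\siginf = \lam\om\lam$ itself: because $\lam$ and $\om$ need not commute, it is not clear a priori that it admits a unique symmetric PD solution, and one cannot simply set $\lam = \tfrac12\siginf^{1/2}\om^{-1/2}$. The congruence $\bm Z = \omright\lam\omright^\top$ is precisely the device that symmetrises the triple product and reduces everything to the well-understood square-root equation. A final reassuring point, rather than a necessary step, is that the resulting $\lam$ does not depend on which factorisation $\om = \omright^\top\omright$ is used: any two such factorisations differ by an orthogonal matrix $\bm O$, and $\sqrt{\bm O\bm Y\bm O^\top} = \bm O\sqrt{\bm Y}\bm O^\top$ makes the expression~\eqref{eq:lambda} invariant --- as it must be, since $\lam$ is the unique solution of the core equation.
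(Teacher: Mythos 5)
Your proposal is correct and shares the paper's overall architecture (Propositions~\ref{th:IT_strategy}--\ref{prop:symmetry} reduce any linear equilibrium to the quadratic equation~\eqref{eq:core_lam_Sym} with $\lam$ symmetric PD, after which $\bu_0=0$ and $\bmu=\bp_0$ follow by the same self-consistency collapse you use), but at the decisive step --- solving $\tfrac14\siginf=\lam\om\lam$ --- you take a genuinely different route from the paper's Proposition~\ref{prop:unique_sym}. The paper parametrizes solutions by the ansatz $\lam=\tfrac12\siginfleft\bO\omleft^{-1}$, which requires factorizing $\siginf$ as well as $\om$ and reduces the equation to $\bO\bO^\top=\id$; it then pins down the unique orthogonal $\bO$ compatible with symmetry and PDness via the polar-type formula $\bO=(\omright\siginfleft)^{-1}\sqrt{(\omright\siginfleft)(\omright\siginfleft)^\top}$, and devotes a separate computation to showing the result is independent of the factorization chosen. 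You instead apply the single congruence $\bm Z=\omright\lam\omright^\top$, which maps the solution set bijectively onto that of $(2\bm Z)^2=\omright\siginf\omleft$ while preserving symmetry and positive-definiteness, so Lemma~\ref{lemma:pos_def_sqrt} delivers existence and uniqueness in one stroke. Your route is arguably tighter on uniqueness: the paper's ``uniqueness'' paragraph chiefly establishes well-definedness of the formula across factorizations of $\om$, whereas your argument shows directly that \emph{every} symmetric PD solution equals~\eqref{eq:lambda}, with factorization-invariance then automatic, as you note; your device is in fact close in spirit to the one the paper uses inside the proof of Proposition~\ref{prop:symmetry} (the matrices $\bm Y=\sqrt{\lam_{\rm S}}\,\om\sqrt{\lam_{\rm S}}$ and $\bm Z=\sqrt{\lam_{\rm S}^{-1}}\,\tfrac14\siginf\sqrt{\lam_{\rm S}^{-1}}$ with $\bm Y^2=\bm Z^2$). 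What the paper's ansatz buys instead is interpretive content: the rotation $\bO$ is exactly the object exploited in Section~\ref{sec:interpretation} (whitening of volumes, rotation, inverse whitening of prices), which your derivation does not expose. One minor point, at the same level of brevity as the paper itself: for the existence direction you assert the candidate ``genuinely produces an equilibrium''; the only unstated check is price efficiency, i.e.\ $\lam=\respinfd(\omd)^{-1}$, which is a one-line consequence of the core equation since it yields $\omd=2\om$ and $\respinfd=\tfrac12\siginf\lam^{-1}$.
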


Note that the existence result is a special case of Proposition 3.1 from Ref.~\cite{caballe1994imperfect}, specialised to the case of a single IT with perfect information. Nonetheless, the absence of other equilibria with non-symmetric $\lam$ lacked a published proof (although a similar discussion can be found in an unpublished paper by Caball\'e and Krishnan \cite{caballe1989insider}). Because the symmetry property of $\lam$ has crucial consequences on both price formation and pricing  (see discussion in Sections~\ref{sec:interpretation} and~\ref{sec:implications}) we believe that ruling out the existence of non-symmetric equilibria is an important step in the analysis of the multivariate Kyle model. This is the main novel theoretical contribution of the present paper.

\begin{example}[Solution of the univariate Kyle model]
  The specialisation of Theorem~\ref{th:main_theorem} to the $n=1$ case in which only one asset is traded yields the well-known solution derived in~\cite{kyle1985continuous} (light lower case symbols are scalar versions of the bold upper case and bold lower case ones):
  \begin{equation}
    \label{eq:lambda_1d}
    \lambda = \frac 1 2 \sqrt{\frac{\sigma_0}{\omega}} \, ,
  \end{equation}
indicating that the constant of proportionality between price and imbalances scales with the amount of price fluctuations $\sqrt{\sigma_0}$ and is inversely proportional to the typical fluctuations of the noise $\sqrt{\omega}$.
Intuitively, the larger price deviations the rational MM expects, the more weight he should give to volume imbalances in order to forecast the fundamental price to the imbalances.
On the other hand, the more noise there is in the system, the less the volume signal is reliable, so that the traded price is closer to the uninformed prior $p_0$.
\end{example}

It is important to note that Eq. \eqref{eq:core_lam_Sym} alone does not imply symmetry of $\lam$, as it is necessary to further impose PDness in order to obtain symmetry. In fact, there are symmetric solutions to \eqref{eq:core_lam_Sym} that are not PD. These solutions lead to efficient traded prices but they do not optimise the IT's utility.

\begin{example}[Saddle point solutions]
Let $\bq$ be a strategy of the form given in \eqref{eq:ITstrategy} and let $\bmu = \bp_0$ and let $\lam$ be a symmetric but not PD solution of \eqref{eq:core_lam_Sym}. By construction, the gradient of $\avg[ \util_{IT}]$ with respect to $\bq$ at this point is zero, satisfying the first order condition for the IT, but this does not guarantee that it is a maximum. We now take an arbitrary perturbation of the strategy $\bq + \delta\bq$ and compute the difference in utility between the two strategies: 
\begin{align*}
\nonumber\delta \avg[\util_{IT}] & =  \avg[\util_{IT}](\bq) - \avg[\util_{IT}](\bq + \delta\bq)\\
			 & = \avg[2\bq^\top\lam\delta\bq + \delta\bq^\top\lam\delta\bq - \delta\bq^\top(\bpf - \bp_0)]\\
  & =\delta\bq^\top\lam\delta\bq\, .
\end{align*}
One can see that $\delta \avg[\util_{IT}]$ is always positive for arbitrary $\delta \bq$ if and only if $\bm \Lambda$ is PD. Without the second order condition the IT could find {better} strategies than $\bq$ meaning that the solution that corresponds to that $\lam$ is not an equilibrium but a saddle point. In fact, when $\lam$ is not PD, $\avg[\util_{IT}]$ can be arbitrarily large.\\
\end{example}

In such a case, the IT would trade arbitrarily large quantities of some linear combinations of assets, and no equilibrium would be possible. 
Imposing a PD impact matrix $\lam$ allows the MM to fix the optimal strategy of the IT, which otherwise would be undetermined -- at least in the absence of a further risk penalty in the IT utility function.  

\subsection{Price and volume covariances of the equilibrium solution}
We now investigate what are the testable implications of the linear equilibrium described above. Here, we provide the relations between the observables that one would measure in such a situation.

\begin{proposition}[Camouflage]
  In the equilibrium, the expectation of the dressed order imbalance is equal to:
  \begin{equation}\label{eq:u0}
      \bu_0 = \avg[\bu] = \avg[\bq] = 0 \, .
  \end{equation}
  The IT order imbalance covariance can be expressed as:
  \begin{equation}\label{eq:Eqq}
  \cov[\bq, \bq] =\om \, ,
  \end{equation}
  implying that:
  \begin{equation}
    \label{eq:equil_omega}
    \omd = 2\om \, .
  \end{equation}
\end{proposition}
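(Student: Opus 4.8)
The plan is to substitute the equilibrium data of Theorem~\ref{th:main_theorem} into the IT strategy~\eqref{eq:ITstrategy} and reduce everything using the symmetry of $\lam$ together with the core equation~\eqref{eq:core_lam_Sym}; all three identities then follow by direct computation.

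First I would settle the means. At the equilibrium $\lam$ is symmetric, so $\lam_{\rm S} = \lam$, and $\bmu = \bp_0$; hence \eqref{eq:ITstrategy} reads $\bq = \frac12 \lam^{-1}(\bpf - \bp_0)$. Taking expectations and using $\avg[\bpf] = \bp_0$ gives $\avg[\bq] = 0$ at once. Since the noise order satisfies $\avg[\bv] = 0$ and $\bu = \bq + \bv$, this yields $\bu_0 = \avg[\bu] = \avg[\bq] = 0$, consistent with the value $\bu_0 = 0$ already recorded in Theorem~\ref{th:main_theorem}.

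Next I would compute the IT order covariance. Because $\avg[\bq] = 0$, the representation $\bq = \frac12 \lam^{-1}(\bpf - \bp_0)$ gives $\cov[\bq,\bq] = \frac14 \lam^{-1}\siginf\lam^{-\top}$, where I have used $\avg[(\bpf - \bp_0)(\bpf - \bp_0)^\top] = \siginf$. The symmetry of $\lam$ turns $\lam^{-\top}$ into $\lam^{-1}$, and inserting $\siginf = 4\lam\om\lam$ from \eqref{eq:core_lam_Sym} collapses the product to $\cov[\bq,\bq] = \lam^{-1}(\lam\om\lam)\lam^{-1} = \om$.

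Finally, the dressed volume covariance satisfies $\omd = \cov[\bu,\bu] = \cov[\bq,\bq] + \om$: since $\bq$ is a deterministic function of $\bpf$ and $\bpf$ is independent of the noise $\bv$, the cross terms $\cov[\bq,\bv]$ vanish and only $\cov[\bq,\bq]$ and $\cov[\bv,\bv] = \om$ survive. Substituting $\cov[\bq,\bq] = \om$ gives $\omd = 2\om$. None of these steps is genuinely hard; the only point demanding care is to invoke the symmetry of the equilibrium $\lam$ so that $\lam_{\rm S}^{-1} = \lam^{-1}$ and $\lam^{-\top} = \lam^{-1}$, which is exactly what makes the core relation \eqref{eq:core_lam_Sym} applicable in closed form.
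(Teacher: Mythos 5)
Your proposal is correct and follows essentially the same route as the paper: direct evaluation of $\avg[\bq]$ and $\cov[\bq,\bq]$ from the equilibrium strategy $\bq = \frac12\lam^{-1}(\bpf-\bp_0)$, reduction via the symmetry of $\lam$ and the core relation $\frac14\siginf = \lam\om\lam$, and then $\omd = \cov[\bq,\bq]+\om = 2\om$. Your explicit remark that independence of $\bpf$ and $\bv$ kills the cross-covariance terms is a slightly fuller justification of a step the paper absorbs into the earlier definition of $\omd$, but it is the same argument.
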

\begin{proof}
 The results in \eqref{eq:u0} and \eqref{eq:Eqq} are obtained by directly evaluating the expectations:
\begin{align*}
 \avg[\bq] &= \frac12\lam^{-1}\avg[\bpf - \bp_0] = 0 \, ,\\
 \cov[\bq, \bq] & = \frac14 \lam^{-1}\avg[(\bpf - \bp_0)(\bpf - \bp_0)^\top]\lam^{-1} = \frac14 \lam^{-1}\siginf\lam^{-1}=\om \, .
\end{align*}
 The last result follows trivially from \eqref{eq:Eqq} and the definition of $\omd$.
\end{proof}

This result implies that, in addition to matching the trading level of the NT in order to conceal his information (as obtained in the one-dimensional case), the IT is required to match the \emph{directions} in which the NT trades (i.e. the eigenvectors of $\om$) such that the trading behaviour of the IT is indistinguishable from that of the NT.
An even more interesting consequence of this behaviour is given in the next proposition.

\begin{proposition}[Information diffusion]
  The fundamental price covariance and the traded price covariance in the equilibrium are related by:
  \begin{equation}
    \label{eq:fund_vs_efficient}
    \sig = \frac 1 2 \siginf \, .
  \end{equation}
  \comment{Furthermore, the residual information is
  \begin{align}
  \nonumber \cov[\bpf,\bpf|\bp] &= \cov[\bpf,\bpf] - \cov[\bpf,\bp]\cov[\bp,\bp]^{-1}\cov[\bp,\bpf]\\
  &= \siginf - \bB\lam\siginf\sig^{-1}\siginf\lam^\top\bB^\top
  \end{align}
  which at equilibrium reads
  \begin{equation}
   \cov[\bpf,\bpf|\bp] = \frac12\siginf.
  \end{equation}}
\end{proposition}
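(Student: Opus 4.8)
The plan is to establish both statements by direct second-moment computations, feeding in the equilibrium data supplied by Theorem~\ref{th:main_theorem} and the preceding propositions: namely $\bmu = \bp_0$, $\bu_0 = 0$, $\omd = 2\om$, the symmetry $\lam = \lam^\top$, and the core relation $\lam\om\lam = \tfrac14\siginf$ of Eq.~\eqref{eq:core_lam_Sym}.

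For the first identity I would start from the pricing rule~\eqref{eq:pricing_rule}, which at equilibrium reads $\bp - \bp_0 = \lam\bu$ since $\bmu = \bp_0$. Because $\avg[\bu] = \bu_0 = 0$, we obtain $\sig = \cov[\bp,\bp] = \lam\,\avg[\bu\bu^\top]\,\lam^\top = \lam\,\omd\,\lam^\top$. Inserting $\omd = 2\om$ and using $\lam^\top = \lam$ turns the right-hand side into $2\,\lam\om\lam$, and the core relation $\lam\om\lam = \tfrac14\siginf$ immediately yields $\sig = \tfrac12\siginf$.

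For the residual-information identity I would first note that $(\bpf,\bp)$ is jointly Gaussian: $\bpf$ and $\bv$ are Gaussian and independent, and by Eqs.~\eqref{eq:ITstrategy} and~\eqref{eq:pricing_rule} both $\bq$ and $\bp$ are affine in $(\bpf,\bv)$. The first displayed line is then just the standard Schur-complement formula for the conditional covariance of a Gaussian vector. The only nontrivial ingredient is the cross-covariance: writing $\bp - \bp_0 = \lam\bu$ and using independence of $\bv$ and $\bpf$ gives $\cov[\bpf,\bp] = \cov[\bpf,\bu]\,\lam^\top = \respinfd\lam^\top = \respinf\lam^\top = \siginf\bB^\top\lam^\top$, where $\bB$ is the informed-trader gain in $\bq = \ba + \bB\bpf$; this is the origin of the $\bB$-factors in the second displayed line (up to transposition/ordering that is immaterial at the symmetric equilibrium). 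To reach the final value I would specialise with $\bB = \tfrac12\lam^{-1}$, so that $\bB^\top\lam^\top = \tfrac12\id$ and hence $\cov[\bpf,\bp] = \tfrac12\siginf$; combining this with $\sig = \tfrac12\siginf$ from the first part and $\cov[\bp,\bpf] = \tfrac12\siginf$ collapses the Schur complement to $\siginf - \tfrac12\siginf(\tfrac12\siginf)^{-1}\tfrac12\siginf = \tfrac12\siginf$.

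The computations are essentially routine; the two points requiring care are (i) verifying joint Gaussianity so that the conditional covariance is genuinely given by the linear-algebra formula, and (ii) the transpose bookkeeping — in particular recognising that at the symmetric equilibrium $\bB\lam = \lam\bB = \tfrac12\id$, which is exactly what makes both the cross-covariance and the final Schur complement collapse to multiples of $\siginf$. I expect (ii), keeping the ordering of $\bB$, $\lam$, $\siginf$ and $\sig$ consistent, to be the main place where a transpose or factor error could creep in.
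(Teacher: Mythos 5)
Your proposal is correct and follows essentially the same route as the paper: compute $\sig = \lam\,\omd\,\lam^\top = 2\lam\om\lam = \tfrac12\siginf$ using symmetry, $\omd = 2\om$ and Eq.~\eqref{eq:core_lam_Sym}, then obtain the residual information from the Gaussian conditional-covariance (Schur complement) formula with $\cov[\bpf,\bp]$ collapsing to $\tfrac12\siginf$ at equilibrium via $\bB\lam = \tfrac12\id$. Your transpose bookkeeping is in fact slightly more careful than the paper's, which writes $\cov[\bpf,\bp] = \bB\lam\siginf$ where the literal computation gives $\siginf\bB^\top\lam^\top$; as you note, the two coincide at the symmetric equilibrium.
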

\begin{proof}
 The traded price covariance is obtained as:
\begin{equation}
  \label{eq:relation_price_covariance}
  \sig = \cov[\bp, \bp] = \lam \omd \lam^\top = 2 \lam \om \lam = \frac 1 2 \siginf \, ,
\end{equation}
where the last equality comes from Eq.~\eqref{eq:lambdaS} in App.~\ref{app:sym_lin_eq}. To find the residual information on needs to compute the covariance between traded prices and fundamental prices:
\begin{equation*}
 \cov[\bpf,\bp] = \bB\lam\cov[\bpf,\bpf] = \bB\lam\siginf.
\end{equation*}
\end{proof}
While the fact that the two covariances are related is expected, the fact that they are proportional to one another
is non-trivial. Furthermore \eqref{eq:relation_price_covariance} implies that the role of the impact matrix $\lam$ is to ``rotate" the fluctuations of the volume in the direction of the fluctuations of the fundamental price.
The intuition behind this behaviour is that due to camouflage the IT will trade in the same direction as the NT.
Hence, a rational MM trying to enforce price efficiency will be required to convert fluctuations along the principal components of $\om$ into forecasts of price fluctuations along the principal components of $\siginf$, thus matching the expected covariance $\siginf$. The fact that the proportionality constant is equal to 1/2 (only half of the information is revealed) is of minor importance, and is a consequence of the single-period feature of the model. Extending the Kyle framework to multiple time steps would turn the constant factor 1/2 into a time-varying function expressing the rate at which information is incorporated into prices (see the original Kyle paper~\cite{kyle1985continuous} for the single asset case and~\cite{vitale2012risk,lasserre2004asymmetric} for respectively the discrete and continuous multi-asset case).\\

Finally, it is interesting to characterise the responses in the context of a multivariate Kyle model.

\begin{proposition}[Responses]
The propositions above and Eq.~\eqref{eq:dress_vs_bare_resp} imply that
\begin{equation}
 \label{eq:equil_resp}
 \respinf = \respinfd = \respd = 2 \resp  = \frac 1 2 \siginf \lam^{-1} \, ,
\end{equation}
so that the responses are uniquely determined by $\siginf$ and $\lam$.
\end{proposition}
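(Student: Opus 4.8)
The plan is to evaluate each of the four response matrices directly at the equilibrium of Theorem~\ref{th:main_theorem}, where $\bmu = \bp_0$, $\bu_0 = 0$ and $\lam$ is symmetric, and then to reduce every expression to the common form $\frac12\siginf\lam^{-1}$. The leftmost equality $\respinf = \respinfd$ needs no computation: it is exactly the content of Eq.~\eqref{eq:dress_vs_bare_resp}, which follows from the independence of the noise-trader flow $\bv$ and the fundamental price $\bpf$ (so that $\cov[\bpf,\bv]=0$ and the dressing by noise leaves the fundamental response unchanged).

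First I would compute the fundamental (bare) response. Substituting the IT strategy $\bq = \frac12\lam^{-1}(\bpf - \bp_0)$ (Eq.~\eqref{eq:ITstrategy} specialised to symmetric $\lam$ and $\bmu = \bp_0$) into $\respinf = \cov[\bpf,\bq]$, and using $\cov[\bpf,\bpf] = \siginf$ together with the symmetry of $\lam$ (so that $(\lam^\top)^{-1} = \lam^{-1}$), gives $\respinf = \frac12\siginf\lam^{-1}$. This fixes the target value for the whole chain.

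Next I would treat the two traded-price responses. Since the pricing rule yields $\bp - \bp_0 = \lam\bu$ at equilibrium, we have $\respd = \cov[\bp,\bu] = \lam\,\cov[\bu,\bu] = \lam\omd$; inserting $\omd = 2\om$ from the Camouflage proposition and then the core identity $\frac14\siginf = \lam\om\lam$ (Eq.~\eqref{eq:core_lam_Sym}) to rewrite $\lam\om = \frac14\siginf\lam^{-1}$ produces $\respd = \frac12\siginf\lam^{-1}$, matching $\respinfd$. For the bare traded-price response I would write $\resp = \cov[\bp,\bq] = \lam\,\cov[\bu,\bq]$ and use $\cov[\bu,\bq] = \cov[\bq,\bq] + \cov[\bv,\bq] = \om$, the cross term vanishing because $\bv$ is independent of $\bpf$ and hence of $\bq$, while $\cov[\bq,\bq]=\om$ is the Camouflage result. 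This gives $\resp = \lam\om = \frac14\siginf\lam^{-1}$, i.e. $2\resp = \frac12\siginf\lam^{-1}$. Collecting the four evaluations establishes the stated string of equalities.

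I do not anticipate a genuine obstacle: the statement is essentially a bookkeeping exercise that chains together facts already proved. The only points demanding care are the consistent use of the symmetry of $\lam$, which lets transposes be dropped and the core equation be applied on either side, and the repeated invocation of that core identity to trade a factor $\lam\om$ for $\frac14\siginf\lam^{-1}$, which presupposes that $\lam$ is invertible --- guaranteed here by its positive-definiteness at equilibrium.
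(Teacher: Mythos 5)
Your proposal is correct and follows essentially the same route as the paper's proof: plug the equilibrium strategies $\bq = \frac12\lam^{-1}(\bpf-\bp_0)$ and $\bp - \bp_0 = \lam\bu$ into the response definitions, then use the camouflage results ($\cov[\bq,\bq]=\om$, $\omd = 2\om$) and the core identity $\frac14\siginf = \lam\om\lam$ to reduce everything to $\frac12\siginf\lam^{-1}$. You are in fact slightly more explicit than the paper at one point --- spelling out that $\cov[\bv,\bq]=0$ kills the cross term in $\resp = \lam\cov[\bu,\bq]$, a step the paper performs silently --- but this is a matter of presentation, not a different argument.
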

\begin{proof}
 The equilibrium values of the responses are found by plugging the equilibrium strategies on their definitions:
\begin{align*}
 \respinfd = \respinf&= \avg[(\bpf - \bp_0)\bq^\top] = \frac12\avg[(\bpf - \bp_0)(\bpf - \bp_0)^\top]\lam^{-1} = \frac 1 2\siginf\lam^{-1} \, ,\\
  \respd & = \avg[(\bp - \bp_0)\bu^\top] = \lam\avg[\bu\bu^\top] =  2\lam\om = \frac 1 2\siginf\lam^{-1} \, ,\\
  \resp & = \avg[(\bp - \bp_0)\bq^\top] = \lam\avg[\bq\bq^\top] =  \lam\om=\frac 1 4\siginf\lam^{-1}\, .
\end{align*}
\end{proof}
An interesting consequence of this proposition is that within the present framework, the dressed responses should be entirely determined by $\siginf$ and $\om$ alone. In Section~\ref{sec:implications} we will see how this naturally leads to the calibration of an impact model in which part of the estimation noise can be reduced by imposing the response structure above as a reasonable prior.
Finally, note that while the dressed and bare responses of the fundamental price coincide,  the dressed response of the traded price is a factor 2 larger than the bare response, due to the spurious correlations between prices and volume transiently introduced by the NT.

\subsection{Utilities and competitive market making}
\label{sec:break_even}
In this model the utilities of the three agents are such that they sum to zero:
\begin{eqnarray}
     \util_{IT} &=& - \bq^\top  (\bp - \bpf) \nonumber \, , \\
     \util_{NT} &=& - \bv^\top  (\bp - \bpf) \nonumber \, , \\
     \util_{MM} &=& (\bq + \bv)^\top  (\bp - \bpf) \, .\nonumber
\end{eqnarray}
It is thus interesting to characterise how utilities are transferred from one agent to the other
 at equilibrium. 
\begin{proposition}[Utilities at equilibrium]\label{prop:utilities}
The utilities of the three agents at equilibrium are equal to
\begin{eqnarray}
     \avg[\util_{IT}] &=& \tr (\resp) \, ,  \\
     \avg[\util_{NT}] &=& - \tr (\resp) \, ,  \\
     \avg[\util_{MM}] &=& 0 \, .
\end{eqnarray}
\end{proposition}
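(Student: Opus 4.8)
The plan is to reduce each expected utility to the trace of one of the response matrices already computed in the Responses proposition, exploiting that every utility has the scalar form $\avg[\bm a^\top(\bpf - \bp)]$ for an appropriate volume vector $\bm a$. Since a scalar equals its own trace, the cyclic property of the trace gives $\avg[\bm a^\top(\bpf - \bp)] = \tr\,\avg[(\bpf - \bp)\,\bm a^\top]$, which turns each utility into the trace of a covariance-type matrix. I would then split it into a fundamental-price part and a traded-price part via $\bpf - \bp = (\bpf - \bp_0) - (\bp - \bp_0)$, so that the two pieces become the response matrices defined in Section~\ref{sec:problem}.

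First I would treat the IT, for which $\bm a = \bq$. The two pieces are exactly the bare responses, $\avg[(\bpf - \bp_0)\bq^\top] = \respinf$ and $\avg[(\bp - \bp_0)\bq^\top] = \resp$, so $\avg[\util_{IT}] = \tr(\respinf - \resp)$. Invoking the relation $\respinf = 2\resp$ from Eq.~\eqref{eq:equil_resp} immediately collapses this to $\tr(\resp)$, which is the claimed value.

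Next the MM, for which $\bm a = \bu$: here the pieces are the dressed responses, giving $\avg[\util_{MM}] = \tr(\respd - \respinfd)$, which vanishes because $\respd = \respinfd$ at equilibrium (again Eq.~\eqref{eq:equil_resp}). The most transparent route, however, is direct: price efficiency Eq.~\eqref{eq:MMcondition} gives $\avg[\bpf - \bp\mid\bu] = \avg[\bpf\mid\bu] - \bp = 0$, whence $\avg[(\bpf - \bp)\bu^\top] = \avg\big[\avg[\bpf - \bp\mid\bu]\,\bu^\top\big] = 0$, so the MM breaks even by construction. For the NT the quickest finish is the pointwise zero-sum identity $\util_{IT} + \util_{NT} + \util_{MM} = 0$, which upon taking expectations forces $\avg[\util_{NT}] = -\tr(\resp)$; alternatively one sets $\bm a = \bv$, writes $\bp - \bp_0 = \lam\bu = \lam(\bq + \bv)$, and uses the independence of $\bv$ from $\bpf$ (hence from $\bq$) to kill the cross term $\avg[\bq\bv^\top]$ and the term $\avg[(\bpf - \bp_0)\bv^\top]$, leaving $\avg[(\bpf - \bp)\bv^\top] = -\lam\avg[\bv\bv^\top] = -\lam\om = -\resp$ by $\resp = \lam\om$.

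I do not anticipate a genuine obstacle: the statement is essentially a bookkeeping consequence of the response relations, with the zero-sum structure providing a built-in consistency check. The only point demanding care is matching the bare versus dressed responses correctly in each expectation — the IT trades $\bq$ (bare), the MM clears $\bu$ (dressed) — and recalling that the factor-two relations $\respinf = 2\resp$ and $\respd = \respinfd$ are exactly what makes the IT's gain equal the NT's loss while the MM nets zero.
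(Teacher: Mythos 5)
Your proof is correct and takes essentially the same route as the paper's: both reduce each expected utility to the trace of a response matrix via $\avg[\bm a^\top(\bpf-\bp)] = \tr\,\avg[(\bpf-\bp)\bm a^\top]$ and use the pointwise zero-sum identity to get one agent for free. The only (immaterial) difference is bookkeeping order --- the paper computes the IT and NT directly and deduces the MM from the zero sum, whereas you compute the IT and MM directly (your tower-property argument $\avg[(\bpf-\bp)\bu^\top]=\avg\bigl[\avg[\bpf-\bp\mid\bu]\,\bu^\top\bigr]=0$ is valid, since $\bp$ is a function of $\bu$) and deduce the NT, while your alternative direct NT computation coincides exactly with the paper's line.
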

\begin{proof}
 \begin{align*}
  \avg[\util_{IT}] &= - \avg[\bq^\top  (\bp - \bpf)] = \tr(\avg[ \bpf\bq^\top] -\avg[ \bp\bq^\top]) = \tr(\respinf - \resp) = \tr(\resp) \, ,\\
  \avg[\util_{NT}] &= - \avg[\bv^\top  (\bp - \bpf)] = \tr(\avg[ \bpf\bv^\top] -\avg[ \bp\bv^\top]) = \tr(0 - \resp) = -\tr(\resp) \, ,\\
  \avg[\util_{MM}] &= - \avg[\util_{IT}] - \avg[\util_{NT}] = 0\, .
 \end{align*}

\end{proof}

As in the standard Kyle model, the linear equilibrium of the Multivariate Kyle Model is such that wealth is transferred from the NT to the IT, which is able to capitalise his informational advantage. The role of the MM is more subtle:  we assumed that the MM enforces an efficient price, hence by construction he is not optimising his wealth. However, as we show in Proposition \ref{prop:utilities}, imposing price efficiency through the pricing rule \eqref{eq:MMcondition} results in a break-even for the MM in the sense that the expectation of his utility is 0. \comment{In the next example we show that the contrary is not true: imposing the break-even condition for the MM is in general not sufficient to ensure efficient prices.}

\begin{example}[Efficient prices and break-even condition]
Consider the break-even condition $\avg[\util_{MM}] = 0$. The expected utility of the MM in the univariate model reads:
\begin{equation*}
 \avg[\util_{MM}] = \frac12(\mu - p_0)^2\lambda^{-1} + \lambda\omega^{\rm d} - r_v^{\rm d} = 0.
\end{equation*}
Imposing $\mu = p_0$ rules out the presence of complex solutions regardless of $r_v^{\rm d}$ and $\omega^{\rm d}$ and therefore:
\begin{equation*}
\begin{aligned}
 \lambda &= r_v^d/\omega^d \, ,\\
\mu &= p_0\, .
\end{aligned}
\end{equation*}
Hence in one dimension a break-even condition for the MM, together with the requirement $\mu = p_0$ is equivalent to imposing price efficiency. This equivalence stems from the low dimensionality of the system because when we impose the break-even condition in one dimension we obtain one equation for one parameter ($\mu$ is fixed by imposing that the solution has to be real for any choice of parameters). \\ 

In the multivariate Kyle model this is not the case. In particular, the break-even condition writes
\begin{equation*}
 \avg[\util_{MM}] = \tr\left(\frac12(\bmu - \bp_0)(\bmu - \bp_0)^\top\lam_{\rm S}^{-1} + \lam\omd - \respinf\right)=0 \, . 
\end{equation*}
In this case we obtain again one equation but we now need to fix $\sim n^2$ parameters. This means that there are many ways to break-even while only one is efficient.
In a more restrictive setting where the MM has to break-even for {each asset individually}, the condition can be written as:
\begin{equation}\label{eq:diagonal}
{\rm diag}\left(\frac12(\bmu - \bp_0)(\bmu - \bp_0)^\top\lam_{\rm S}^{-1} + \lam\omd - \respinf\right) = 0. 
\end{equation}
In this case there are $n$ equations, which are still not enough to fix $\sim n^2$ parameters.\\
\end{example}
Therefore, in one dimension the condition $\avg[\util_{MM}] = 0$ is both a necessary and sufficient condition for the prices to be efficient. In multiple dimensions, although $\avg[\util_{MM}] = 0$ is still is a consequence of price efficiency, the converse is not true, as clarified in the next remark.


\begin{remark}[Sufficient and necessary conditions for efficiency]
\comment{The asset-wise break even condition \eqref{eq:diagonal} can be also written as:
\begin{equation}\label{eq:asset-wise}
\avg[ (p_i - v_{i})y_i  ] = 0\ , 
\end{equation}
for each asset $i$. Under the assumption $\bmu = \bp_0$ and in matrix form, \eqref{eq:asset-wise} leads to 
\begin{align*}
 0 &= \diag(\avg[(\bp - \bpf)\bu^\top])\\
   &= \diag(\lam\avg[\bu\bu^\top] - \avg[\bpf\bu^\top])\\
   &= \diag(\lam\omd - \respinf)
\end{align*}
which corresponds to the diagonal of \eqref{eq:MM_params}. In order to recover \eqref{eq:MM_params} (and not only its diagonal) we 
need the off-diagonal version of equation \eqref{eq:asset-wise}:
\[\avg[ (p_i - v_{i})y_j  ] = 0\ , \] 
for $j\neq i$.
Unfortunately, these terms are not directly related to any utility, and merely indicate that in order to
prices to be efficient, the MM should destroy any correlation between the order flow $y_j$ and the mispricing $p_i - v_{i}$.} 
\end{remark}
Hence, whereas in one dimension the behavior of a Kyle MM can be justified on the basis of utility and competitive market making arguments (see Ref.~\cite{kyle1985continuous}), it is much less clear why in the multi-asset case a competitive MM would enforce price efficiency on the basis of his/her own utility alone.

\section{Interpretation}
\label{sec:interpretation}
The properties of the linear equilibrium that have been illustrated in the section above provide
some intuition about the phenomenology of the model, but they do not help in making sense of Eq.~\eqref{eq:lambda} for the impact $\lam$. The following discussions are meant to provide a justification of the expression for $\lam$, so to shed some light on its structure. As far as we know, the following arguments have never appeared in the literature before.

\subsection{Whitening of prices and volumes}
The derivation of the linear equilibrium presented in Theorem~\ref{th:main_theorem} relies on the quadratic matrix equation \eqref{eq:core_lam_Sym} which  is solved by  (see proof of Proposition~\ref{prop:unique_sym}):
\begin{equation}
    \label{eq:change_var_lam}
    \lam = \frac 1 2 \siginfleft \bO \omleft^{-1} \, ,
\end{equation}
where we have used a decomposition for the price correlations $\siginf=\siginfleft\siginfright$ analogous to that of $\om= \omleft \omright$ and $\bO$ is an orthogonal matrix. The remarkable part of this finding is that, regardless of $\siginf$ and $\om$, there always exists a unique
change of basis $\bO$ such that $\lam$ is SPD.
Hence, the prediction process operated by the MM that results into the pricing rule $\Delta \bpf = \bpf - \bp_0= \lam \bu $ can be seen as resulting from the following steps:
\begin{enumerate}
    \item Apply the \emph{whitening} transformation $\omleft^{-1}$ to $\bu$, so to obtain the whitened imbalance $\tilde \bu = 2^{-1/2} \omleft^{-1} \bu$, with $\cov[\tilde \bu, \tilde \bu] = \id$.
    \item Apply the rotation $\bO$ to $\tilde \bu$, obtaining the whitened fundamental price variations $\Delta\tilde \bpf = \bO \tilde \bu$. As before, one has in fact $\cov[\Delta\tilde \bpf, \Delta\tilde \bpf] = \avg[\Delta\tilde \bpf \Delta\tilde \bpf^\top] = \id$.
    \item Apply the \emph{inverse whitening} transformation $\siginfleft$ in order to find the prediction for the fundamental price variation $\Delta \bpf = \bpf - \bp_0 = 2^{-1/2} \siginfleft \Delta\tilde \bpf $.
\end{enumerate}
While the first and the third steps of this procedure are intuitive, and can be seen as arising from dimensional analysis only, the nature of the rotation $\bO$ applied during the second step is less trivial, and will be investigated more closely in the following sections.

\begin{example}[Degeneracy in the one-dimensional Kyle model]
  In one dimension Eq.~\eqref{eq:core_lam_Sym}  becomes
  $$\frac{\sigma_0}{4} = \lambda^2 \omega\, , $$
  leading to the couple of solutions $\lambda = \pm \frac{1}{2} \sqrt{\sigma_0 / \omega}$. The degeneracy between the solutions is easily resolved by imposing positive-definiteness of the impact $\lam$, which selects the positive solution. In this case we have  $\siginfleft = \sqrt{\sigma_0}$, $\omleft = \sqrt{\omega}$, $\bO = 1$.
  Hence, the one-dimensional Kyle model has too low a dimensionality for anything non-trivial to happen from the point of view of $\bO$: it is only in higher dimension that one can appreciate its general structure.
\end{example}

\subsection{Basis of prices and basis of volumes}
As we show in the proof of Proposition \ref{prop:unique_sym}, the rotation $\bO$ that appears in Eq.~\eqref{eq:change_var_lam} can be expressed as
\begin{equation}
 \label{eq:rotation_simple}
 \bO = (\omright \siginfleft)^{-1} \sqrt{(\omright \siginfleft)(\omright \siginfleft)^\top}\, .
\end{equation}
In order to make some progress in understanding the complex nature of this rotation for $n>1$, it is important to notice that it is completely specified in terms of the matrix $\omright \siginfleft$.
Let us assume that the matrix factorization chosen to compute $\om$ and $\siginf$ is the Principal Component Decomposition:
\begin{eqnarray}
    \om & = & \underbrace{\bU \diag(\bom)}_{\omleft}
              \underbrace{\diag(\bom)\bU^\top}_{\omright} \, , \\
    \siginf & = & \underbrace{\bV \diag(\bsig) }_{\siginfleft}
                  \underbrace{\diag(\bsig) \bV^\top }_{\siginfright} \, ,
\end{eqnarray}
where $\bU$ and $\bV$ are orthogonal matrices, and where the vectors $\bom$ and $\bsig$ have positive elements.
Then we have:
\begin{equation}
    \omright \siginfleft = \diag(\bom)\bU^\top \bV \diag(\bsig) \, ,
\end{equation}
indicating that, besides the diagonal matrices $\diag(\bom)$ and $\diag(\bsig)$ that set the scale of the fluctuations of fundamental prices and order imbalancees, it is the overlap $\bU^\top\bV $ that links the eigenvectors of the volumes with that of the fundamental price. In order to obtain a stronger insight about the structure of $\bO$, we shall proceed with some simple examples.

\begin{example}[Two-dimensional Kyle model]\label{ex:2d}
  In two dimensions  the matrix $\bO$ can be characterised, without loss of generality, by a single angle $\theta$: \footnote{Indeed, by consistently choosing the factorisation of $\siginfleft$ and $\omleft$ one
  can fix arbitrarily the determinant of $\bO$ to be plus or minus one due to Eq.~\eqref{eq:rotation_simple}.}
\begin{equation*}
\bO = \begin{pmatrix}
\cos\theta&-\sin\theta\\
\sin\theta&\cos\theta
\end{pmatrix}.
\end{equation*}
Consider a system given by the following correlations:
\begin{equation*}
 \siginf = \begin{pmatrix}
            1 & \rho\\
	    \rho & 1
           \end{pmatrix},\qquad
 \om = \begin{pmatrix}
            \omega_{1} & 0\\
	    0 & \omega_{2}
           \end{pmatrix}.
\end{equation*}
In this case we can calculate $\bO$ explicitly. In particular one has
\begin{equation*}
\siginfleft= \frac{1}{\sqrt{2}} \begin{pmatrix}
            \sqrt{1 + \rho} & -\sqrt{1 - \rho}\\
	    \sqrt{1 + \rho} & \sqrt{1 - \rho}
           \end{pmatrix},\qquad
 \omleft^{-1} =\begin{pmatrix}
            \omega_{1}^{-1/2} & 0\\
	    0 & \omega_{2}^{-1/2}
           \end{pmatrix},
\end{equation*}
and imposing the symmetry of $\lam$ one can show that 
\[\theta = \arcsin\left(\frac{\omega_{1}^{-1/2}\sqrt{1 + \rho} + \omega_{2}^{-1/2}\sqrt{1 - \rho}}{\sqrt{2}\Delta}\right)\]
  where $\Delta = \sqrt{\omega_{1} + \omega_{2} + 2(\omega_{1}\omega_{2})^{1/2}\sqrt{1 - \rho^2}}$.
Finally we obtain
\begin{equation}\label{eq:2d}
\lam = \frac{1}{2\Delta}\begin{pmatrix}
1 + \sqrt{\frac{\omega_{2}}{\omega_{1}}}\sqrt{1-\rho^2}& \rho\\
\rho & 1 + \sqrt{\frac{\omega_{1}}{\omega_{2}}} \sqrt{1-\rho^2}
\end{pmatrix}.
\end{equation}
\end{example}
We can use the results in example \ref{ex:2d} to examine a couple of extreme cases: the one of extreme illiquidity and the one in which assets are strong correlations.

\begin{example}[Extreme illiquidity]\label{ex:extreme_illiquidity}    
  Let us consider the limit in which $\omega_{2} = \epsilon \omega$ with $\epsilon \to 0$, whereas $\omega_{1}=\omega$, implying that $y_2 \epsilon^{-1/2}$ is expected to be finite. Then the prediction of the MM \comment{up to first order in $\epsilon$} is:
  \begin{equation*}
  \bp - \bp_0 = \frac{1}{2 \sqrt{\omega}}
  \begin{pmatrix}
  y_1\\
  \rho y_1 + \sqrt{1-\rho^2}(y_2 \epsilon^{-1/2}) 
  \end{pmatrix} \, ,
\end{equation*}
implying that the efficient price of a liquid instrument is fixed solely by the volume traded on that same instrument, whereas for illiquid markets it is important to take into account quantities traded on liquid, correlated markets. On the other hand, the behavior of the IT \comment{up to first order in $\epsilon$} is given by:
\begin{equation*}
  \bq = \sqrt{\omega}
  \begin{pmatrix}
          \Delta v_1 \\
	  \sqrt{\frac{\epsilon}{1 - \rho^2}}(\Delta v_2 - \rho\Delta v_1)
         \end{pmatrix}\, .
  \end{equation*}
  Now the IT is encouraged to trade only the liquid asset and furthermore ignore the illiquid one when placing the bid. Similarly to the previous case, the traded price will remain of order one because the market orders for the second asset will be order $\sqrt{\epsilon}$.
\end{example}

\comment{Price innovation in example 6 can be interpreted as an indication of how an efficient market should work: the symmetry of cross-impact implies that the effect of trading one dollar of a very liquid asset $a_1$ on the price of an iliquid asset $a_2$ is the same as the effect of trading a dollar of asset $a_2$ on the price of asset $a_1$, however because there are going to be very few dollars traded on $a_2$ it will not affect significantly the price of $a_1$. Conversely, the price of $a_2$ will be heavily driven by the traded volume of $a_1$.}\\

Although we only proved the existence of the linear equilibrium for invertible $\siginf$, we can study the behavior of $\lam$ in the limiting case in which $\siginf$ has low rank as in the two following examples:

\begin{example}[Strongly-correlated prices]\label{ex:extreme_correlation}
  Consider $\rho = 1 - \delta$, with $\delta\to0$, so that $(\Delta p_1 - \Delta p_2) /\sqrt{\delta}$ is finite.
  According to \eqref{eq:2d} the prediction of the MM \comment{up to first order in $\delta$} is:
\begin{equation*}
  \bp - \bp_0 = \frac{1}{2 \sqrt{\omega_{1} + \omega_{2}}}
  \begin{pmatrix}
    y_1 + y_2\\
    y_1 + y_2 
  \end{pmatrix} \, ,
\end{equation*}
implying that when dealing with strongly correlated instruments, the efficient prices can be built by summing algebraically the volumes traded on each of them before normalizing by the global liquidity.
On the other hand, the bet of the IT \comment{up to first order in $\delta$} can be written as:
\begin{equation*}
  \bq = \frac12\sqrt{\frac{1}{\omega_{1} + \omega_{2}}}
  \begin{pmatrix}
  \omega_{1} (\Delta v_1 + \Delta v_2) \\
  \omega_{2} (\Delta v_1 + \Delta v_2)
\end{pmatrix} +
\sqrt{\frac{\omega_{1} \omega_{2}}{2(\omega_{1} + \omega_{2})}}
  \begin{pmatrix}
  \frac{1}{\sqrt{\delta}}(\Delta v_1 - \Delta v_2)\\
  \frac{1}{\sqrt{\delta}}(\Delta v_2 - \Delta v_1)
\end{pmatrix}\, .
  \end{equation*}
In this case the IT is encouraged to place orders proportional to either the average price variation, or orders of finite size proportional to the relative difference, rescaled by the typical size of the relative price moves $\sqrt{\delta}$. Indeed, the IT should bet on price variations inversely proportional to how common they are.
\end{example}

The specialization of the result above to the extreme case  $\rho \to \pm 1$ regime is particularly interesting, because it is a consequence of a more general result that can be applied whenever $\siginf$ is rank one, which is shown in the next example.

\begin{example}[Rank one price covariance]\label{ex:rank1} 
\comment{Let us consider the case in which all eigenvalues of $\siginf$ except for one tend to 0, and therefore $\siginf$ tends to a rank 1 matrix.} The interest of this example resides in the fact that $\lam$ displays in this case a particularly illuminating expression, providing a simple recipe for pooling together volumes belonging to different financial instruments correlated to the same underlying product.
Without loss of generality, $\siginf$ in the rank one case can be written as $\siginf = \bs \sigma \bs^\top$where $\bs \in \mathbb{R}^n$ is a unit vector and $\sigma > 0$. It is then straightforward to verify that the matrix
\begin{equation}
  \lam = \frac 1 2 \bs \left( \frac{\sigma}{\bs^\top \om \bs} \right)^{1/2} \bs^\top
\end{equation}
yields the  linear equilibrium of the model.
The above equation has a very simple interpretation:
\begin{itemize}
  \item The matrix $\lam$ commutes with $\siginf$, and its eigenvectors are insensitive to those of $\om$ (i.e., the volume fluctuations induced by noise traders).
  \item The factor $\sigma^{1/2}$ sets the scale for the price variations as being that  of the only mode of $\siginf$ that has non-zero fluctuations.
  \item By writing the principal component decomposition of $\om$ as $\om = \sum_{a=1}^n \bw_a \omega_{a} \bw^\top_a$, one can write the denominator as:
  \begin{equation}
    (\bs^\top \om \bs)^{1/2} = \left(\sum_{a=1}^n (\bw^\top_a \bs)^2 \omega_{a}\right)^{1/2} \, .
  \end{equation}
  Such a denominator sets the scaling with volume of the impact $\lam$. It amounts to a projection of  $\om$ on the only non-zero mode of $\siginf$, or equivalently a sum of the individual volume variances $\omega_{a}$ of $\om$, weighted by their projections on $\bs$.
\end{itemize}

\end{example}


\section{Implications}
\label{sec:implications}

The expression~\eqref{eq:lambda} for the impact matrix $\lam$ is of interest beyond the context of the Kyle Model. It  provides an insightful inference prescription for cross-impact models, as opposed to what has been done so far in the context of cross-impact fitting (see~\cite{benzaquen2017dissecting,mastromatteo2017trading,wang2016cross,schneider2016cross}). In such a context, one is interested in estimating from empirical data a model to forecast a price variation $\Delta\bp$ with a predictor of the form:
\begin{equation}
  \label{eq:ximp_estimation}
  \hat\Delta \bp = \hat \lam \bu \, ,
\end{equation}
in which it is essential to faithfully model how trading the instrument $i$ will impact the instrument $j$. The goal of this section is to establish a link between the Kyle model and the empirical calibration of Eq.~\eqref{eq:ximp_estimation}, where the price variations $\Delta\bp$ and the imbalances $\bu$ are sampled from empirical data.

\subsection{From theory to data: empirical averages and loss function}
First, let us consider a dataset of $T$ i.i.d.\ efficient price variations and volumes, $\{\Delta\bp^{(t)}\}_{t=1}^T$ and $\{\bu^{(t)}\}_{t=1}^T$, sampled from an unknown distribution with bounded variance that needs not to be related to the Kyle model presented in the previous sections of the manuscript. Accordingly, with a slight abuse of notation, from now on $\avg[ \cdot ]$ denotes the averages taken with respect to the underlying distribution from which $\Delta\bp$ and $\bu$ are sampled.
In addition, we introduce an empirical measure $\langle \cdot \rangle$ to denote the averages with respect to the empirical sample under investigation. In particular,  simple empirical estimators for the average price variation and the mean imbalance write:
\begin{align}
  \hat {\bp}_0 &=  \langle \Delta \bp \rangle = \frac 1 T \sum_{t=1}^T \Delta\bp^{(t)} \\
  \hat \bu_0 &=  \langle \bu \rangle = \frac 1 T \sum_{t=1}^T \bu^{(t)} \, .
\end{align}
Henceforth, we  consider that price changes and order flows are shifted by their empirical mean $\hat {\bp}_0$ and $\hat \bu_0$, and therefore we set $\hat {\bp}_0 =\hat \bu_0 =0$. The corresponding covariances are then defined accordingly as:
\begin{align}
  \hatsig &= \langle \Delta\bp \Delta\bp^\top\rangle \, , \\
  \hatomd &= \langle \bu \bu^\top\rangle \, , \\
  \hatrespinfd &= \langle \Delta\bp \bu^\top\rangle \, .
\end{align}
As $T\to\infty$, the empirical averages converge to the actual means and covariances.
Our goal is to show different recipes for the calibration of the model~\eqref{eq:ximp_estimation}, all relying on the estimators above. We shall compare such estimators with the \emph{Kyle estimator} $\hat \lam_{\rm Kyle}$, which, as we shall see,  displays several interesting properties. In order to evaluate the quality of different calibrations, we consider a quadratic loss $\chi^2$  defined below.
\begin{definition}[Loss]
Given an empirical measure $\langle \cdot \rangle$, we define the \emph{loss} $\chi^2$ as the function
\begin{equation}
  \label{eq:chisq}
  \chi^2 = \frac 1 2 \left< (\hat \Delta \bp - \Delta\bp)^\top \bm M(\hat \Delta \bp - \Delta\bp)\right> \, ,
\end{equation}
where $\hat \Delta \bp = \hat\lam \bu$ is a linear predictor of the efficient price variation and $\bm M$ is a SPD matrix.
\end{definition}
This definition of the loss implicitly implies that the calibration of the model is very similar to the task of the market maker in the Kyle model: whereas in the latter  the job of the MM is to forecast what the \emph{fundamental} price variation is on the basis of the order flow imbalance, in this setting one is required to forecast the \emph{efficient} (observed) price on the basis of the imbalance.
Even though the two problems are different, they involve extremely similar equations, a fact that will allow us to leverage the results of the previous sections, valid in principle for a Kyle Market Maker, in the calibration setting.
Hence, under this definition of loss one is only required to find proxies for the price variations $\Delta\bp$ and the dressed order imbalance $\bu$, disregarding in principle the realisations of the underlying fundamental price $\bpf$.

\subsection{Cross-impact estimators}
Here we present three different cross-impact estimators and discuss their properties. The proofs of the propositions in this section are provided in App.~\ref{app:estimation}.

\subsubsection{Maximum Likelihood Estimation}
\label{sec:linear_regression}
The simplest estimation recipe that one can construct in order to estimate the cross-impact matrix $\lam$ is probably the Maximum Likelihood Estimator (MLE):  $\hat\lam_{\rm MLE}$, that is obtained by minimising a quadratic loss.

\begin{proposition}\label{prop:chi2}
 The minimisation of the loss $\chi^2$ with respect to $\hat\lam_{\rm MLE}$  yields:
\begin{equation}
    \label{eq:mle}
\begin{aligned}
      \hat\lam_{\rm MLE} & =& \hatrespinfd (\hatomd)^{-1} \ ,
\end{aligned}
 \end{equation}
 independent of $\bm M$. The loss at the minimum is given by:
\begin{equation}
    \chi^2_{\rm MLE} = \frac12\tr\left(\bm M(\hatsig - \hat \sig_{\rm MLE})\right) \ ,
\end{equation}
where $\hat \sig_{\rm MLE} = \hatrespinfd(\hatomd)^{-1}(\hatrespinfd)^\top$ is the portion of the covariance of  fundamental price variations explained by the model~\eqref{eq:mle}.
\end{proposition}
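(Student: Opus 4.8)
The plan is to recast the minimisation of $\chi^2$ over $\hat\lam$ as a generalised least-squares problem and solve it by elementary matrix calculus. First I would insert $\hat\Delta\bp=\hat\lam\bu$ into Eq.~\eqref{eq:chisq} and expand the quadratic form into three pieces: a term quadratic in $\hat\lam$, a cross term linear in $\hat\lam$, and a constant. Since each summand is a scalar, I would replace it by its trace and use cyclicity of the trace, the symmetry $\bm M=\bm M^\top$, and the linearity of $\langle\cdot\rangle$ to move the empirical average onto the data outer products $\bu\bu^\top$, $\bu\Delta\bp^\top$ and $\Delta\bp\Delta\bp^\top$. This collapses every average into one of the empirical second-moment matrices and gives
\[
\chi^2=\tfrac12\left[\tr\!\big(\hat\lam^\top\bm M\hat\lam\,\hatomd\big)-2\,\tr\!\big(\bm M\hat\lam(\hatrespinfd)^\top\big)+\tr\!\big(\bm M\hatsig\big)\right].
\]

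Next I would differentiate with respect to the matrix $\hat\lam$. The quadratic term contributes $2\bm M\hat\lam\hatomd$ (using that $\bm M$ and $\hatomd$ are symmetric) and the linear term contributes $-2\bm M\hatrespinfd$, so the stationarity condition reads $\bm M\hat\lam\hatomd=\bm M\hatrespinfd$. Because $\bm M$ is SPD and hence invertible, I can cancel it on the left to obtain the normal equation $\hat\lam\hatomd=\hatrespinfd$, whose solution $\hat\lam_{\rm MLE}=\hatrespinfd(\hatomd)^{-1}$ is manifestly independent of $\bm M$. To confirm this stationary point is the global minimiser rather than a saddle, I would note that the quadratic form is convex in $\hat\lam$: its Hessian is controlled by $\bm M$ and $\hatomd$, both SPD, so $\chi^2$ is strictly convex and the critical point is unique.

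Finally, to evaluate the loss at the optimum I would substitute $\hat\lam_{\rm MLE}$ back, using the two identities $\hat\lam_{\rm MLE}\hatomd=\hatrespinfd$ and $\hat\lam_{\rm MLE}(\hatrespinfd)^\top=\hat \sig_{\rm MLE}$. The cross term becomes $-2\,\tr(\bm M\hat \sig_{\rm MLE})$ directly, while the quadratic term first reduces to $\tr(\hat\lam_{\rm MLE}^\top\bm M\hatrespinfd)$ via the normal equation and then, by cyclicity, also equals $\tr(\bm M\hat \sig_{\rm MLE})$; adding the constant $\tr(\bm M\hatsig)$ yields $\chi^2_{\rm MLE}=\tfrac12\tr(\bm M(\hatsig-\hat \sig_{\rm MLE}))$. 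The computation is essentially routine; the only genuinely delicate points are ensuring that the $\bm M$-dependence really cancels in the minimiser (the conceptual content of the ``independent of $\bm M$'' claim, which hinges on invertibility of $\bm M$) and confirming via the SPD hypotheses on both $\bm M$ and $\hatomd$ that the stationary point is a true minimum.
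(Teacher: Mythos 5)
Your proposal is correct and follows essentially the same route as the paper's proof: expand the loss into the trace form $\frac12\tr\bigl[\hat\lam^\top\bm M\hat\lam\hatomd-2\hat\lam^\top\bm M\hatrespinfd+\bm M\hatsig\bigr]$, set the matrix derivative $\bm M(\hat\lam\hatomd-\hatrespinfd)$ to zero, cancel the invertible $\bm M$, and substitute back to get the minimal loss. Your explicit convexity check via the SPD hypotheses on $\bm M$ and $\hatomd$ is a small addition the paper leaves implicit, but it does not change the argument.
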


Note that Eq.~\eqref{eq:mle} is almost identical to Eq.~\eqref{eq:MM_params}, that has been obtained by enforcing the efficient price condition. This is no coincidence, as in a linear equilibrium setting with normally distributed variables the traded price is exactly a linear function of the order flow imbalance $\bu$. There are though two important differences distinguishing the linear equilibrium of the Kyle model and the minimization of $\chi^2_{\rm MLE}$:
\begin{itemize}
\item A MM in the Kyle setting is actually performing a linear regression of the \emph{fundamental} price, whereas here we are interested in regressing the \emph{efficient} price. This justifies why Eq.~\eqref{eq:MM_params} uses the \emph{fundamental} price response $\respinf^d$, as opposed to Eq.~\eqref{eq:mle} that uses the efficient price $\resp^d$.
\item In the Kyle setup case, the MM is aware that a part of the order imbalance comes from the IT who owns information about the fundamental price and is optimising his strategy assuming that the MM uses a MLE.
  This leads to a quadratic equation for $\lam$ (see Eq.~(\ref{eq:self_consistent_lambda}) below) even in the linear equilibrium setting, whereas Eq.~\eqref{eq:mle} above is a linear relation for $\hat\lam_{\rm MLE}$.
\end{itemize}

\begin{remark}
One could have defined the Multivariate Kyle Model by replacing the efficient-price condition (see Eq.~\eqref{eq:MMcondition}) with a condition on the minimisation of $\chi^2$ without affecting the linear equilibrium. Moreover, this approach would do a better job at   generalising the model to more exotic settings. In fact, Eq.~\eqref{eq:mm_equation_general} provides a linear predictor only under the assumption of normality for prices and volumes: for other distributions, the relation linking $\bu$ and $\bpf$ is in general non-linear. Having a MM that minimises a loss $\chi^2$ is a reasonable assumption if one thinks that a MM without enough knowledge of the underlying distributions or computational power to build an efficient price should rely on linear regressions in order to estimate the  latter.
\end{remark}

The expression for $\hat \lam_{\rm MLE}$ only contains the dressed empirical estimators $\hatrespinfd$ and $\hatomd$, allowing one to estimate the impact matrix from real data.
In addition, such an estimator has the benefit of being very simple to implement, as it only requires the solution of a linear equation for $\hat \lam_{\rm MLE}$. Unfortunately, this estimator lacks several properties that turn out to be very useful in cases of practical interest:
\begin{description}
\item[Symmetry] The matrix $\hat \lam_{\rm MLE}$ is symmetric if and only if $\hatrespinfd$ and $\hatomd$ commute.
\item[Positive definiteness] The matrix $\hat \lam_{\rm MLE}$ can have negative eigenvalues (see \cite{benzaquen2017dissecting}).
\item[Consistency of correlations] In general, $[\hatsig, \hat \sig_{\rm MLE}] \neq 0$. Hence, the price variations inferred by using the order flow imbalance do not share the eigenvectors of the real price variations, unless the response $\hatrespinfd (\hatomd)^{-1} \hatrespinfd$  commutes with $\hatsig$.
\end{description}
The first two properties are extremely important in the calibration of cross-impact models, as shown in~\cite{alfonsi2016multivariate,schneider2016cross}: to perform pricing within a cross-impact setup, the matrix $\lam$ should be SPD in order to ensure absence of price manipulation. This is not the case for a MLE, which is thus not suitable for practical purposes.

\subsubsection{EigenLiquidity model}
In order to cure the lack of symmetry of the Maximum Likelihood Estimator, the idea proposed in Ref.~\cite{mastromatteo2017trading} is to construct an estimator of cross-impact $\hat \lam_{\rm ELM}$ that is symmetric by construction, by enforcing the relation:
\begin{equation}
  \label{eq:commutation}
  [\hat \lam_{\rm ELM}, \hatsig] = 0 \, ,
\end{equation}
which means that the impact eigendirections coincide with the eigenportfolios (or principal components of the asset space). This  prevents price manipulations that would be induced by an asymmetric of $\lam$.
Its calibration is explained in the next proposition.
\begin{proposition}\label{prop:chi2ELM}
Consider a pricing rule $\hat \Delta \bp = \hat\lam_{\rm ELM} \bu$, in which we impose the commutation relation~\eqref{eq:commutation}. Then, the maximum likelihood estimator obtained under such constraint takes the form:
  \begin{equation}
\begin{aligned}
    \hat \lam_{\rm ELM} &=& \sum_{a=1}^n \hat \bs_a g_a  \hat \bs_a^\top \, ,
\end{aligned}
  \end{equation}
  where $\{ \hat \bs_a \}_{a=1}^n$ are the empirical eigenvectors of $\hatsig$, and where:
  \begin{equation}
    g_a = \frac{\hat \bs_a^\top \hatrespinfd \hat \bs_a}{\hat \bs_a^\top \hatomd \hat \bs_a} \, .
  \end{equation}
Furthermore, the loss at the minimum is given by:
\begin{equation}
  \chi_{\rm ELM}^2 = \frac12\tr\left[M\left(\hatsig -\hatrespinfd\sum_{a=1}^n \hat\bs_a \frac{\hat\bs_a^\top\hatrespinfd\hat\bs_a}{\hat\bs_a^\top\hatomd\hat\bs_a}\hat\bs_a^\top\right)\right]\ .
\end{equation}
\end{proposition}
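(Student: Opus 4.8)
The plan is to convert the matrix optimisation into a scalar one by using the constraint, and then minimise mode by mode. First I would characterise the feasible set: since $\hatsig$ is an empirical covariance it is symmetric and admits an orthonormal eigenbasis $\{\hat\bs_a\}_{a=1}^n$, so the commutation relation $[\hat\lam_{\rm ELM},\hatsig]=0$ forces $\hat\lam_{\rm ELM}$ to be diagonal in that basis, i.e. $\hat\lam_{\rm ELM}=\sum_{a=1}^n \hat\bs_a\, g_a\, \hat\bs_a^\top$ for scalars $g_a$ (and, when the eigenvalues of $\hatsig$ are distinct, this automatically makes $\hat\lam_{\rm ELM}$ symmetric). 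I would add the short caveat that in the presence of degenerate eigenvalues the commutant also contains rotations within eigenspaces, but that these do not change the value of the loss, so the stated diagonal representative is always an admissible minimiser. This reduces the search over the matrix $\hat\lam_{\rm ELM}$ to a search over the $n$ numbers $g_a$.

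Next I would substitute the predictor $\hat\Delta\bp=\hat\lam_{\rm ELM}\bu$ into the loss and replace the empirical second moments by $\langle\bu\bu^\top\rangle=\hatomd$, $\langle\Delta\bp\,\bu^\top\rangle=\hatrespinfd$ and $\langle\Delta\bp\,\Delta\bp^\top\rangle=\hatsig$, turning $\chi^2$ into the trace expression
\[
\chi^2=\tfrac12\tr\!\big(\hat\lam_{\rm ELM}\bm M\hat\lam_{\rm ELM}\hatomd\big)-\tr\!\big(\bm M\hat\lam_{\rm ELM}(\hatrespinfd)^\top\big)+\tfrac12\tr\!\big(\bm M\hatsig\big)\,.
\]
Inserting the spectral form of $\hat\lam_{\rm ELM}$ makes this a quadratic function of the vector $(g_a)$, with Hessian entries built from $(\hat\bs_a^\top\bm M\hat\bs_b)(\hat\bs_b^\top\hatomd\hat\bs_a)$ and a linear term built from $\hat\bs_a^\top\bm M\hatrespinfd\hat\bs_a$.

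I would then impose the first-order conditions $\partial\chi^2/\partial g_c=0$, which read $\hat\bs_c^\top\bm M(\hat\lam_{\rm ELM}\hatomd-\hatrespinfd)\hat\bs_c=0$. The crucial observation — and the step that does the real work — is the decoupling of the $g_a$: once the weight matrix shares the eigenbasis of $\hatsig$, so that $\hat\bs_c^\top\bm M\hat\bs_a\propto\delta_{ca}$, every off-diagonal term collapses and the stationarity condition reduces to $g_c\,(\hat\bs_c^\top\hatomd\hat\bs_c)=\hat\bs_c^\top\hatrespinfd\hat\bs_c$, giving the advertised $g_c=\dfrac{\hat\bs_c^\top\hatrespinfd\hat\bs_c}{\hat\bs_c^\top\hatomd\hat\bs_c}$ independently of the scale factors in $\bm M$; positive-definiteness of $\hatomd$ keeps the denominators strictly positive and makes the stationary point the unique minimum. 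I expect this decoupling to be the main obstacle, because for a \emph{generic} SPD $\bm M$ the factors $\hat\bs_c^\top\bm M\hat\bs_a$ couple the modes and the clean per-mode formula fails. I would therefore make explicit that $\bm M$ is taken to commute with $\hatsig$ — the natural choice in the EigenLiquidity construction, where impact is calibrated eigenportfolio by eigenportfolio — and note that this is precisely what is needed for the estimator to be $\bm M$-independent, in contrast with the unconstrained MLE of Proposition~\ref{prop:chi2}, where $\bm M$-independence held for the full matrix equation.

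Finally I would read off the loss at the optimum by substituting $g_c$ back. Since the reduced problem is quadratic and decoupled, each mode contributes $-\tfrac12 m_c(\hat\bs_c^\top\hatrespinfd\hat\bs_c)^2/(\hat\bs_c^\top\hatomd\hat\bs_c)$ beyond the constant $\tfrac12\tr(\bm M\hatsig)$; collecting the modes into $\hatrespinfd\sum_a\hat\bs_a\,g_a\,\hat\bs_a^\top$ and using the spectral resolution of the identity yields the stated $\chi^2_{\rm ELM}=\tfrac12\tr\!\big[\bm M\big(\hatsig-\hatrespinfd\sum_{a}\hat\bs_a\tfrac{\hat\bs_a^\top\hatrespinfd\hat\bs_a}{\hat\bs_a^\top\hatomd\hat\bs_a}\hat\bs_a^\top\big)\big]$, which mirrors the MLE loss of Proposition~\ref{prop:chi2} with the full inverse $(\hatomd)^{-1}$ replaced by its mode-by-mode diagonal approximation in the $\hatsig$ eigenbasis.
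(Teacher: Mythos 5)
Your proposal is correct and follows essentially the same route as the paper's own proof: use the commutation constraint to reduce $\hat\lam_{\rm ELM}$ to the diagonal form $\sum_a \hat\bs_a g_a \hat\bs_a^\top$, write the loss as the quadratic trace expression, impose the per-mode stationarity conditions $\hat\bs_a^\top \bm M(\hat\lam_{\rm ELM}\hatomd - \hatrespinfd)\hat\bs_a = 0$, solve for $g_a$, and substitute back to get $\chi^2_{\rm ELM} = \frac12\tr\bigl[\bm M\bigl(\hatsig - \hatrespinfd\hat\lam_{\rm ELM}\bigr)\bigr]$. The one place where you genuinely depart from the paper is the decoupling step, and your version is the more careful one. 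The paper passes from $\tr\bigl(\bm M(\hat\lam\hatomd - \hatrespinfd)\hat\bs_a\hat\bs_a^\top\bigr)=0$ to $\hat\bs_a^\top(\hat\lam\hatomd-\hatrespinfd)\hat\bs_a=0$ with the words ``for arbitrary $\bm M$'', whereas you observe explicitly that for a generic SPD $\bm M$ the overlaps $\hat\bs_a^\top\bm M\hat\bs_b$ couple the modes, so the stationary $g_a$ do depend on $\bm M$, and the clean per-mode formula holds exactly when $[\bm M,\hatsig]=0$ (e.g.\ $\bm M = \id$ or any weight diagonal in the $\hatsig$ eigenbasis). Indeed, demanding the trace condition for literally all $\bm M$ would force $\hat\lam\hatomd = \hatrespinfd$, i.e.\ the unconstrained MLE equation, which is generically infeasible within the commuting class; your explicit hypothesis is the right way to make the proposition's $\bm M$-independent answer rigorous, at the modest price of restricting the admissible weights.

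One side remark in your write-up is imprecise: when $\hatsig$ has degenerate eigenvalues, the commutant contains full blocks acting within each eigenspace, and since $\hatomd$ and $\hatrespinfd$ need not respect those eigenspaces, different representatives (equivalently, different choices of eigenbasis inside a degenerate subspace) do in general change the loss, and optimizing over the full block could do strictly better than any diagonal representative. The honest statement is that the proposition, like the paper's proof, implicitly assumes a non-degenerate spectrum, or simply defines the ELM class by the diagonal form itself. This caveat does not affect the core of your argument, which is sound.
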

In this case, the properties discussed above become:
\begin{description}
\item[Symmetry] The matrix $\hat \lam_{\rm ELM}$ is symmetric by construction.
\item[Positive definiteness] The matrix $\hat \lam_{\rm ELM}$ can still have negative eigenvalues, although empirically the matrix $\hat\lam_{\rm ELM}$ has been reported not to display eigenvalues significantly smaller than zero (see~\cite{mastromatteo2017trading}).
\item[Consistency of correlations] Similarly to the MLE case, the price variation covariances $\hat\sig_{\rm ELM} = \hat\lam_{\rm ELM}\hatomd\hat\lam_{\rm ELM}$ do not generally commute with $\hatsig$. It is the case if and only if $[\hatomd, \hatsig] = 0$.
\end{description}
Summarizing, the price to pay in order to have a symmetric estimator is a larger loss function.
Note that if $\hatrespinfd$ and $\hatomd$ commute with $\hatsig$, then $\hat\lam_{\rm ELM} = (\hatrespinfd)^2(\hatomd)^{-1} = \hat\lam_{\rm MLE}$ and therefore the loss is equal to the best possible $\chi_{\rm ELM}^2 = \chi_{\rm MLE}^2$.


\begin{remark}[Estimators in one dimension]
Since commutation and symmetry are granted in one dimension, imposing the condition \eqref{eq:commutation} does not add any constraints to the minimization of the loss function and therefore $\hat\lambda_{\rm ELM} = \hat\lambda_{\rm MLE}= \frac{r^d}{\omega^d}$.
\end{remark}

\subsubsection{Kyle estimator}
Now, let us provide some intuition on how to construct an impact estimator inspired by Kyle's model, $\hat\lam_{\rm Kyle}$, by characterising the $\lam$ in the Kyle model from a different perspective.
First, take the MM solution Eq.~\eqref{eq:MM_params} (matching the MLE, Eq.~\eqref{eq:mle}), and exchange the dressed order imbalancees and responses with those predicted in the linear equilibrium through Eqs.~\eqref{eq:equil_resp} and~\eqref{eq:equil_omega}:
\begin{eqnarray}
  \respinfd &\to& \frac 1 2 \siginf \lam^{-1} \, , \\
  \omd &\to& 2 \om \ .
\end{eqnarray}
Then, the linear regression of Eq.~\eqref{eq:MM_params} becomes a quadratic matrix equation:
\begin{equation}
\label{eq:self_consistent_lambda}
\lam = \respinfd (\omd)^{-1} \to\lam =  \frac 1 4 \siginf \lam^{-1} \om^{-1} \, .
\end{equation}
After imposing symmetry and positive definiteness of $\lam$, this is exactly the equation that leads to the expression of $\lam$ obtained in the Kyle linear equilibrium (see App.~\ref{app:explicit_linear_equil}).
By doing so, one ensures an efficient price variation process whose associated covariance is half that of the ``true'' price variations. As alluded to above, in the Kyle setup the
MM is implicitly performing a linear regression, and \emph{knows that the IT is aware of his/her algorithm}, finally leading to Eq.~\eqref{eq:self_consistent_lambda}.
We now propose an estimator $\hat \lam_{\rm Kyle}$ that exploits these ideas.
\begin{proposition}\label{prop:chi2Kyle}
  Let us consider an estimator $\hat \Delta \bp := \hat \lam_{\rm Kyle}\bu$ 
 such that
  \begin{itemize}
  \item $\hat \lam_{\rm Kyle}$ is SPD
  \item The empirical covariance of the associated efficient price variations $\hat\sig_{\rm Kyle} = \langle \hat \Delta \bp \hat \Delta \bp^\top\rangle$ satisfies
    $$\hat\sig_{\rm Kyle}   = k^2 \hatsig$$
  \end{itemize}
  with $k \in \mathbb{R}$.
  Then the unique $\hat \lam_{\rm Kyle}$ satisfying these constraints is given by 
  \begin{equation}
    \label{eq:kyle_est}
  \hat \lam_{\rm Kyle} = k   (\hat\omright^{\rm d})^{-1} \sqrt{\hat \omright^{\rm d} \hatsig \hat \omleft^{\rm d}}  (\hat\omleft^{\rm d})^{-1}    
  \end{equation}
  where $\hatomd = \hat \omleft^{\rm d} \hat \omright^{\rm d}$ is a decomposition of $\hatomd$ such that $\hat \omleft^{\rm d} = (\hat \omright^{\rm d})^\top$.\\
 Moreover, the loss at the minimum is 
\begin{equation}
    \chi^2_{\rm Kyle} = \frac12\tr\left[\bm M \left((1 + k^2)\hatsig - 2k\hatrespinfd(\hat\omright^{\rm d})^{-1} \sqrt{\hat \omright^{\rm d} \hatsig \hat \omleft^{\rm d}}  (\hat\omleft^{\rm d})^{-1}    \right)\right]
\end{equation} 
and the value of $k$ that minimizes the loss is given by 
\[k^\star = \frac{ \tr(\bm M\hatrespinfd(\hat\omright^{\rm d})^{-1} \sqrt{\hat \omright^{\rm d} \hatsig \hat \omleft^{\rm d}}  (\hat\omleft^{\rm d})^{-1} )}{\tr(\bm M\hatsig)}\ .\]
\end{proposition}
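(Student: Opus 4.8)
The plan is to collapse the two stated constraints into the quadratic matrix equation that has already been solved in Theorem~\ref{th:main_theorem}, so that its unique SPD solution transfers verbatim, and then to evaluate and minimise the loss by elementary trace manipulations.

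First I would turn the covariance constraint into an equation for $\hat\lam_{\rm Kyle}$. Since $\hat\Delta\bp=\hat\lam_{\rm Kyle}\bu$ and $\hat\lam_{\rm Kyle}$ is symmetric, its associated covariance is
\begin{equation*}
\hat\sig_{\rm Kyle}=\langle\hat\lam_{\rm Kyle}\bu\bu^\top\hat\lam_{\rm Kyle}^\top\rangle=\hat\lam_{\rm Kyle}\,\hatomd\,\hat\lam_{\rm Kyle}\, ,
\end{equation*}
so that the requirement $\hat\sig_{\rm Kyle}=k^2\hatsig$ reads $\hat\lam_{\rm Kyle}\hatomd\hat\lam_{\rm Kyle}=k^2\hatsig$. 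This is formally identical to Eq.~\eqref{eq:core_lam_Sym} under the substitutions $\om\to\hatomd$ and $\tfrac14\siginf\to k^2\hatsig$. Because $\hatsig$ is SPD and $\hat\omright^{\rm d}=(\hat\omleft^{\rm d})^\top$ is invertible, the matrix $\hat\omright^{\rm d}\,k^2\hatsig\,\hat\omleft^{\rm d}=k^2(\hat\omleft^{\rm d})^\top\hatsig\,\hat\omleft^{\rm d}$ is positive semidefinite, so its PD square root exists and is unique by Lemma~\ref{lemma:pos_def_sqrt}. Invoking the existence-and-uniqueness argument that solves Eq.~\eqref{eq:core_lam_Sym} (proof of Proposition~\ref{prop:unique_sym} and Theorem~\ref{th:main_theorem}) then gives the unique SPD solution
\begin{equation*}
\hat\lam_{\rm Kyle}=(\hat\omright^{\rm d})^{-1}\sqrt{\hat\omright^{\rm d}\,k^2\hatsig\,\hat\omleft^{\rm d}}\,(\hat\omleft^{\rm d})^{-1}=k\,(\hat\omright^{\rm d})^{-1}\sqrt{\hat\omright^{\rm d}\hatsig\hat\omleft^{\rm d}}\,(\hat\omleft^{\rm d})^{-1}\, ,
\end{equation*}
which is Eq.~\eqref{eq:kyle_est}; factoring the scalar out of the PD square root uses $\sqrt{k^2\bm X}=|k|\sqrt{\bm X}$, and the SPD constraint selects the positive branch (WLOG $k>0$).

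For the loss I would expand Eq.~\eqref{eq:chisq} with $\hat\Delta\bp=\hat\lam_{\rm Kyle}\bu$, replace the empirical second moments by $\hatsig,\hatomd,\hatrespinfd$, and use the symmetry of both $\bm M$ and $\hat\lam_{\rm Kyle}$ to obtain
\begin{equation*}
\chi^2=\tfrac12\left[\tr(\hat\lam_{\rm Kyle}\bm M\hat\lam_{\rm Kyle}\hatomd)-2\tr(\bm M\hatrespinfd\hat\lam_{\rm Kyle})+\tr(\bm M\hatsig)\right]\, .
\end{equation*}
Writing $\hat\lam_{\rm Kyle}=k\bm N$ with $\bm N:=(\hat\omright^{\rm d})^{-1}\sqrt{\hat\omright^{\rm d}\hatsig\hat\omleft^{\rm d}}(\hat\omleft^{\rm d})^{-1}$, the quadratic term is $k^2\tr(\bm N\bm M\bm N\hatomd)=k^2\tr(\bm M\bm N\hatomd\bm N)$ by cyclicity of the trace, and the constraint in the form $\bm N\hatomd\bm N=\hatsig$ collapses it to $k^2\tr(\bm M\hatsig)$; combined with the constant term this reproduces the announced $\chi^2_{\rm Kyle}$. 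Finally, since $\chi^2_{\rm Kyle}$ is a quadratic in the scalar $k$ with positive leading coefficient $\tfrac12\tr(\bm M\hatsig)>0$ (as $\bm M$ and $\hatsig$ are SPD), its minimiser follows from $\partial_k\chi^2_{\rm Kyle}=k\,\tr(\bm M\hatsig)-\tr(\bm M\hatrespinfd\bm N)=0$, giving $k^\star$ as stated.

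I expect the only genuinely delicate point to be the reduction step: one must verify that the two constraints are \emph{exactly} equivalent to the quadratic matrix equation whose unique SPD solution has already been characterised, and handle the branch of the square root so that the scalar $k$ can be pulled out while preserving positive definiteness (the constraint only fixes $k^2$, so the SPD requirement is what selects the sign). Everything downstream — the trace expansion of the loss and the one-dimensional minimisation over $k$ — is routine, the essential trick being the combined use of trace cyclicity and the constraint to collapse the quadratic term.
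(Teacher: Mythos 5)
Your proposal is correct and takes essentially the same route as the paper's own proof: you reduce the two constraints to the quadratic matrix equation $\hat\lam_{\rm Kyle}\hatomd\hat\lam_{\rm Kyle}=k^2\hatsig$ and invoke the uniqueness of its SPD solution (Proposition~\ref{prop:unique_sym}), then evaluate the loss via the trace expansion of Eq.~\eqref{eq:app_loss_explicit} and minimise the resulting quadratic in $k$. If anything, your explicit treatment of the square-root branch via $\sqrt{k^2\bm X}=|k|\sqrt{\bm X}$, with the SPD requirement fixing the sign, is slightly more careful than the paper, which leaves the sign of $k$ implicit.
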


Note that the complete analogy with the Kyle model is recovered for $k = 1$. In fact, if the sample dataset is drawn from an actual Multivariate Kyle Model with $\siginf / 2 = \sig=\hatsig$ and $2 \om = \omd =\hatomd$ as described in the previous sections, one obtains: 
$$
\avg[\hat \lam_{\rm Kyle}] \xrightarrow{T\to\infty}  k \lam \, .
$$
The advantage of leaving $k$ as a free parameter is that it allows  to increase or decrease the loss $\chi^2$ of the Kyle estimator without affecting the eigenvectors of the predicted covariance. In fact, the relationship $2\sig = \siginf $ arising in the Kyle model is thought not to be universal, so that it is more natural to think of it as a phenomenological parameter expressing the informational content of trades.\\

The adoption of $\hat\lam_{\rm Kyle}$ as an estimator of cross-impact has several interesting advantages with respect to more customary estimators such as the Maximum Likelihood Estimator $\hat\lam_{\rm MLE}$ or an impact estimator based on the Eigen-Liquidity Model $\hat\lam_{\rm ELM}$:
\paragraph{Symmetry and positive definiteness}

The Kyle construction imposes \emph{a priori} symmetry and PDness due to the request of optimality of the IT. In fact, the optimality condition~\eqref{eq:ITstrategy} and the stability one (PDness of $\lam$) are exactly the ones that a risk-neutral rational agent, as considered in~\cite{alfonsi2016multivariate,schneider2016cross}, would face when trying to optimise his profits.

\paragraph{Consistency of correlations}
This construction allows to recover the empirical correlations $\hatsig$ for $k=1$. For \emph{any value} of $k$ one has $[ \hatsig, \hat\sig_{\rm Kyle} ] = 0$.

\paragraph{Loss}
An important point concerns the loss function obtained by using $\hat\lam_{\rm Kyle}$. In general $\chi^2_{\rm Kyle}$ is \emph{larger} than that obtained by doing Maximum Likelihood Estimation, which minimizes $\chi^2$ by design. 
Hence, the price to pay in order to have symmetry, positive semi-definiteness and consistency of correlations is in general a worse fit of empirical data with respect to $\hat\lam_{\rm MLE}$ (as measured by the loss $\chi^2$). However, if our data behaves as the Kyle model, i.e. if $\hatrespinfd = \hatsig \hat\lam_{\rm Kyle}^{-1}$ then $k^\star = 1$ and:
\[\chi^2_{\rm Kyle} = 0 = \chi^2_{\rm MLE} \ ,\]
implying that $\lam$ is (\emph{obviously!}) the best estimator for a market that reproduces the statistics of the multivariate Kyle model.
Moreover, the loss at the minimum is zero because the \emph{efficient} price variation $\Delta \bp$ is completely determined by the imbalance. Indeed, the problem of regressing the \emph{fundamental} price variation $\Delta \bpf$ would have yielded a non-zero loss at the minimum even for data sampled from an actual Kyle model due to the relation $\sig < \siginf$.

\begin{remark}[ELM and Kyle estimators for rank one $\hatsig$]
 As we showed in example \ref{ex:rank1}, when the price variation covariance is rank one $\hat\lam_{\rm Kyle}$ is proportional to $\hatsig$. On the other hand, $\hat\lam_{\rm ELM}$ has, by definition, the same eigenbasis as the price variations covariance. Therefore, if $\hatsig$ is rank one then $\hat\lam_{\rm ELM}$ is proportional to $\hat\lam_{\rm Kyle}$ and $\hatsig$.
\end{remark}

\subsection{A comparison of recipes}\label{sec:empirical}

In order to compare the performance of the estimators we compute observables that are directly related to the properties discused above:
\begin{itemize}
 \item \textbf{Loss:} we compute the loss $\chi^2$ as given in Eq.~\eqref{eq:chisq}.
 \item \textbf{Asymmetry:} to quantify the degree of symmetry of the estimator we compute the norm of its antisymmetric part divided by the total norm:
\[\alpha = \frac{|\hat\lam - \hat\lam^\top|}{2|\hat\lam|}\ .\]
  The value $\alpha = 0$ ($\alpha = 1$) means that the estimator is symmetric (antisymmetric).
  \item \textbf{Positive definiteness:} we compute the lowest real part of the spectrum of the estimator:
\[\lambda^\star = \min_i\Re(\hat\lambda_i)\ ,\]
where $\{\hat\lambda_i\}_{1\geq i \geq n}$ are the eigenvalues of $\hat\lam$. Positive definiteness is equivalent to $\lambda^\star > 0$.
\item\textbf{Consistency of correlations:} we compute the norm of the commutator of $\hatsig$ and the covariances of the estimated price variations $\hatsig_{\rm est}$:
\[\kappa = \frac{|\hatsig_{\rm est}\hatsig - \hatsig\hatsig_{\rm est}|}{|\hatsig||\hatsig_{\rm est}|}\ .\]
Correlations are consistent if $\kappa = 0$.
\end{itemize}

\subsubsection{Application to real data}

In Fig.~\ref{fig:all} we show the covariances and estimators for the 2-year, 5-year, 10-year, and 30-year tenors of the U.S. Treasury Futures traded in the Chicago Board of Trade (see~\ref{sec:fabricating} for details). In Table \ref{tb:observables} we show the values of the observables corresponding to the $4\times4$ system. In Table \ref{tb:6pairs} we show the values of the same obserbavles but averaged over the 6 possible combinations of $2\times2$ systems. As expected, the Kyle procedure fares slightly worse for the loss function, but is much better than other procedures on all other counts.

\begin{table}[h]
\centering
\begin{tabular}{ c c c c c }
\hline
Estimator & Loss ($\chi^2$) & Commutator ($\kappa$) & PDness ($\lambda^\star$) & Asymmetry ($\alpha$)\\
\hline
$\hat\lam_{\rm MLE}$  & 1.20 & 0.038 & 0.031 & 0.31\\
$\hat\lam_{\rm ELM}$& 1.26 & 0.129 & 0.017 & 0\\
$\hat\lam_{\rm Kyle}$& 1.32 & 0 & 0.386 &0\\ \hline
\end{tabular}
\caption{Values of the observables for the three estimators for U.S. Treasury Futures and Bonds.}\label{tb:observables}
\end{table}

\begin{figure}[H]
\centering
 \includegraphics[width=\textwidth]{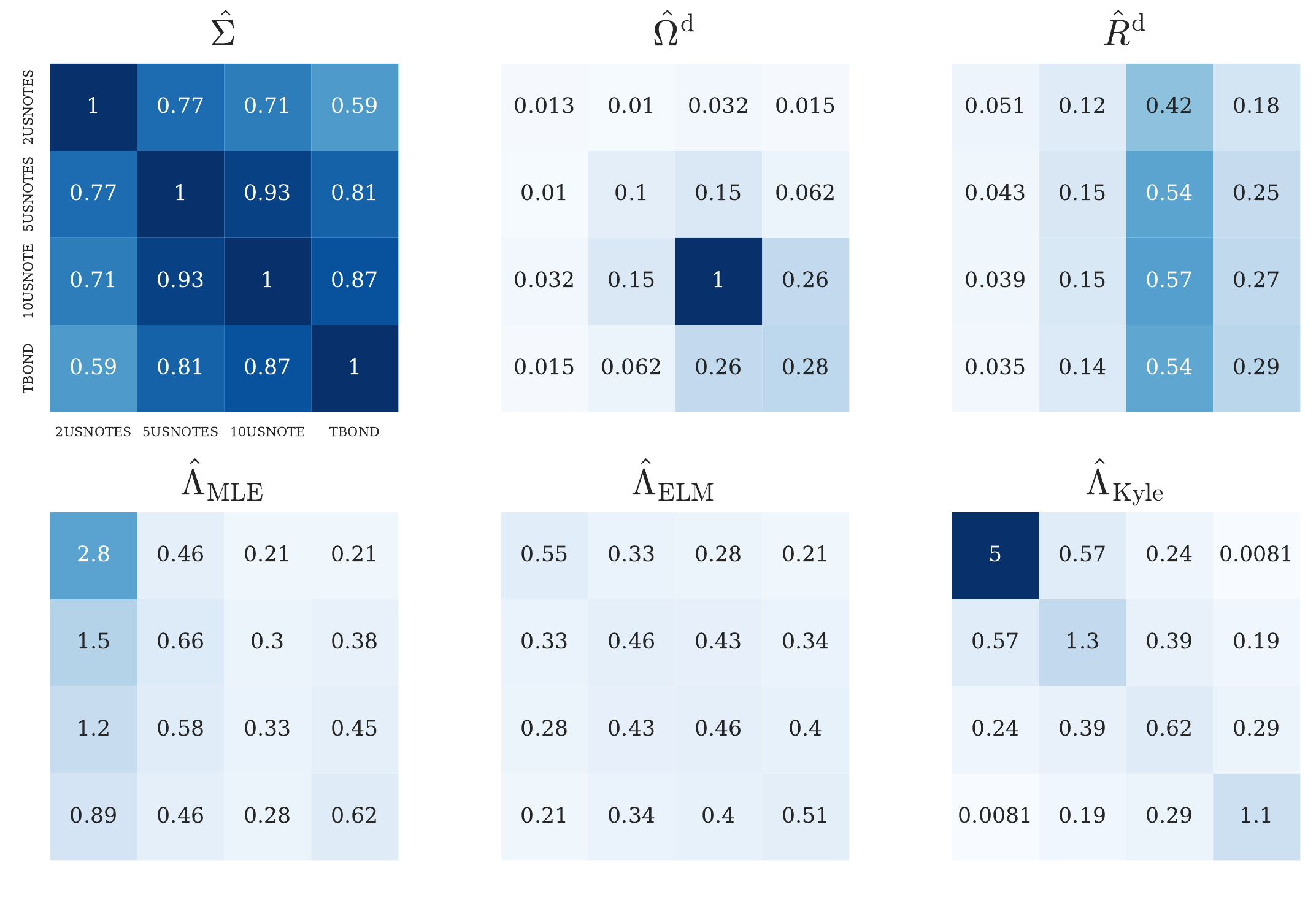}
 \caption{\textbf{Averaged US Treasury Futures covariances and impact estimators for 2016.}\newline
 Top from left to right: Daily averaged price variation covariances in untis of risk squared. Unitless daily averaged volume covariances rescaled with respect to the maximum. To obtain the original volumes $\hatomd$ has to be multiplied by $2242\$^2$. Market response rescaled with respect to the maximal volume in units of risk. To obtain the original volumes $\hatrespinfd$ has to be multiplied by $\sqrt{2242}\$$. \newline
Bottom from left to right: Maximum Likelyhood impact estimator, EigenLiquidity Model based impact estimator and Kyle model based impact estimator in units of risk. To obtain the estimators in the right units they have to be divided by $\sqrt{2242}\$$.}\label{fig:all}
\end{figure}

\begin{table}[h] 
\centering 
\begin{tabular}{ c c c c c } 
\hline
 Estimator & Loss ($\chi^2$) & Commutator ($\kappa$) & PDness ($\lambda^\star$) & Asymmetry ($\alpha$)\\ 
\hline 
$\hat\lam_{\rm MLE}$  & 0.668 & 0.033 & 0.318 & 0.186\\ 
$\hat\lam_{\rm ELM}$& 0.691 & 0.108 & 0.231 & 0\\ 
$\hat\lam_{\rm Kyle}$& 0.718 & 0 & 0.952 &0 \\
\hline
\end{tabular} 
\caption{Average values of the observables for the three estimators for the 6 pairs of U.S. Treasury Futures.\label{tb:6pairs}} 
\end{table}

\subsubsection{Synthetic data} 
To further explore the effects of price variation and liquidity correlations, in the next two sections we compute the values of the observables described above on the 6 sets of synthetic $2\times2$ covariance matrices with specific price variation covariances $(\hatsig_\rho,\hatomd_\rho,\hatrespd_\rho)$ and specific volume covariances $(\hatsig_\epsilon,\hatomd_\epsilon,{\hatrespd}_\epsilon)$ respectively fabricated by modifying the data for U.S. Treasury Futures as explained in~\ref{sec:fabricating} (see Table~\ref{tb:6pairs}).
The idea behind this construction is to provide a synthetic but realistic set of matrices that are parametrized by either a liquidity parameter $\epsilon$ or a correlation parameter $\rho$. By varying
those parameters, one can extrapolate from the reality (recovered for specific values of $\epsilon$ and/or $\rho$) and a fictitious world in which assets can be made more or less liquid with respected to reality by varying the knob $\epsilon$, and more or less correlated by varying the parameter $\rho$.

\begin{figure}[t!]
\centering
 \includegraphics[width=\textwidth]{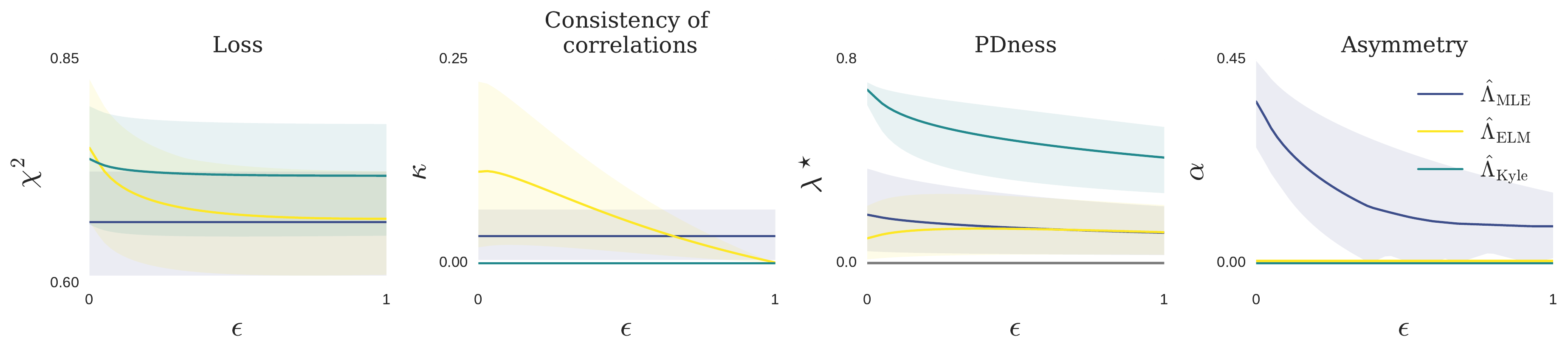}
 \caption{\textbf{Effect of liquidity.} Averaged values (solid lines) and intervals between the maximum and the minimum (shaded area) of the observables for $\epsilon$ ranging from 0 to 1.}\label{fig:liquidity}
\end{figure}

\paragraph{Extreme illiquidity}
In order to explore the effect of extreme heterogeneous liquidities we rescale the $2\times2$ covariances as:
\[\hatomd\mapsto\hatomd_\epsilon = \begin{pmatrix}
             1 & \sqrt{\epsilon}\,\hatomd_{12}\\
	    \sqrt{\epsilon}\,\hatomd_{12} & \epsilon
            \end{pmatrix}\ .
\]
In Fig.~\ref{fig:liquidity} we show the value of each of the observables averaged over the 6 sets of covariances for $\epsilon$ ranging from $0$ to $1$. Regarding the loss, the estimator that better performs is $\hat\lam_{\rm MLE}$, regardless of $\epsilon$ as expected. It is interesting to observe that $\chi^2_{\rm MLE}$ does not depend on $\epsilon$. The reason is that, by construction, the estimated price variation covariances are invariant under changes in the volume variances\footnote{This is not a particular property of systems in 2 dimensions, it holds for all dimensions.}:
\[\hatsig_{\rm MLE} = \hatrespd_\epsilon(\hatomd_\epsilon)^{-1}(\hatrespd_\epsilon)^\top = \hatrespd(\hatomd)^{-1}(\hatrespd)^\top.\]
The loss of the other two estimators depends on $\epsilon$. For small values of $\epsilon$, $\chi^2_{\rm Kyle}$ is lower than $\chi^2_{\rm ELM}$, suggesting that $\hat\lam_{\rm Kyle}$ might be better in case of markets with heterogeneous volatilities. For $\epsilon\gtrsim0.2$ the loss distributions for the three estimators have a lot of overlap and no one is significantly better.
\begin{figure}[t!]
\centering
 \includegraphics[width=0.7\textwidth]{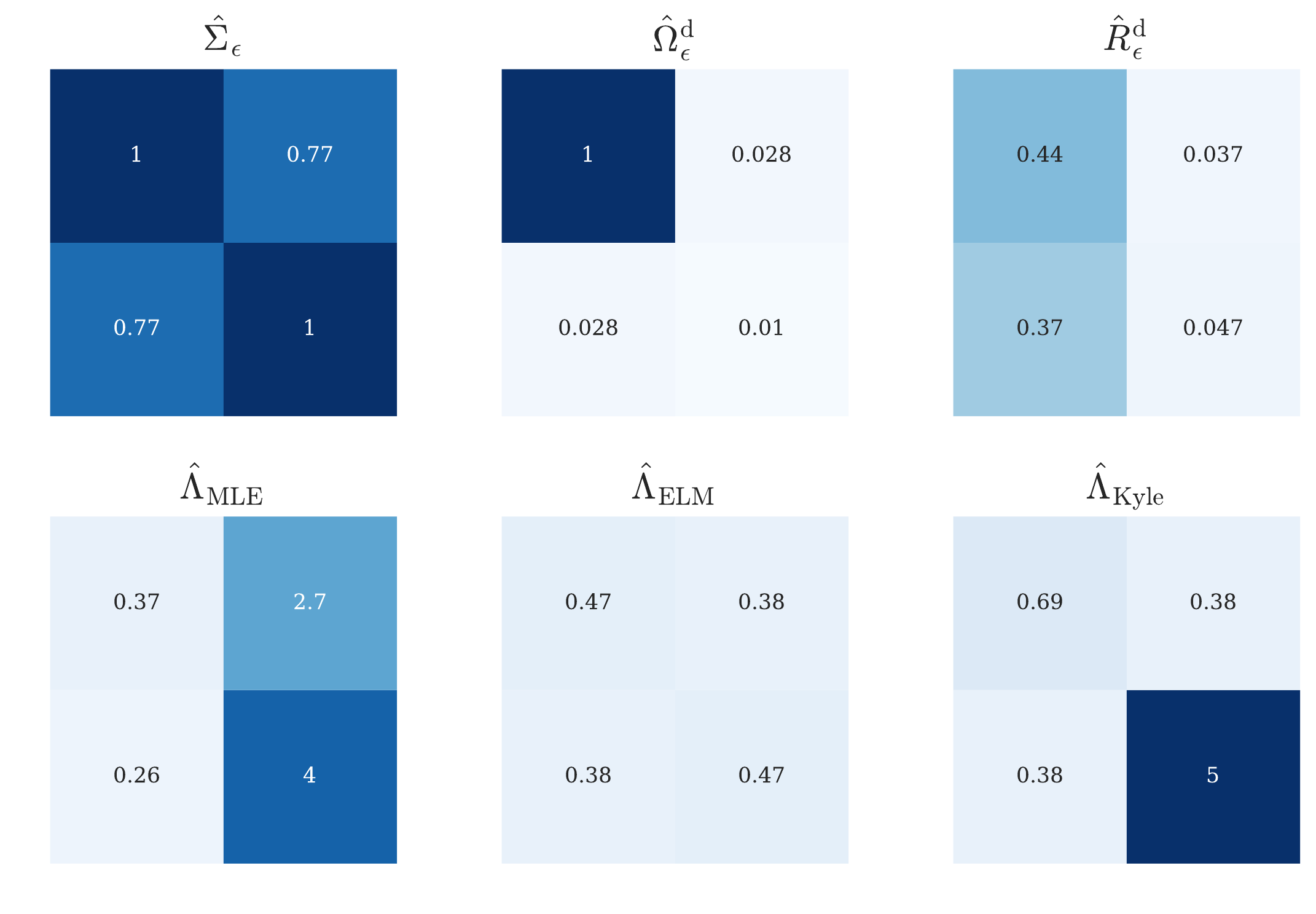}
 \caption{\textbf{Extreme illiquidity.} Covariance matrices and estimators with an illiquid asset. The value of $\rho=0.77$ is the empirical correlation of the representative pair 2USNOTES-5USNOTES, whereas the value of $\epsilon=0.01$ is obtained by rescaling the empirical responses and volume covariances of the same pair. }\label{fig:illiquid}
\end{figure}
As explained above, $\kappa_{\rm MLE}$ does not depend on $\epsilon$ and $\kappa_{\rm Kyle} = 0$, by construction. On the other hand, the limit $\kappa_{\rm ELM}$ decreases to 0 as $\epsilon\to1$ because in the limit $\epsilon = 1$, $\hatomd_\epsilon$ and $\hatsig_\epsilon$ commute.
All estimators are PD, and the only non-symmetric estimator is $\hat\lam_{\rm MLE}$, whose antisymmetric part decreases as $\epsilon$ increases (even though it never becomes fully symmetric). The other two are symmetric by construction.\\

Reminiscent to what we showed in example \ref{ex:extreme_illiquidity}, in Fig. \ref{fig:illiquid} we display the covariance matrices and the three corresponding impact estimators in a case where one of the assets' liquidity is much smaller than the other one. Note that due to the lack of liquidity of the second asset the second column of $\hatrespd_\epsilon$ is considerably smaller than the first one while price covariances are not changed. Let us analyze heuristically the  estimators obtained:
\begin{itemize}
 \item $\hat\lam_{\rm MLE}$: Trading the second asset will have a strong impact on its price and a significant impact on the price of first asset (due to the non-negligible price correlations). Trading the first asset will have little impact on both prices. 
 \item $\hat\lam_{\rm ELM}$: In this case the impact estimator predicts that trading either asset will have a similar impact.
\item $\hat\lam_{\rm Kyle}$: The main difference between the predictions of the Kyle estimator and  $\hat\lam_{\rm MLE}$ is that in this case trading the second asset will strongly modify its price but it will have very little impact on the first assets' price. As already emphasized, this is how efficient markets should work: liquid instruments should not be affected by the order flow on correlated illiquid instruments, otherwise price manipulation strategies would be possible. 
\end{itemize}

\begin{figure}[t!]
\centering
 \includegraphics[width=\textwidth]{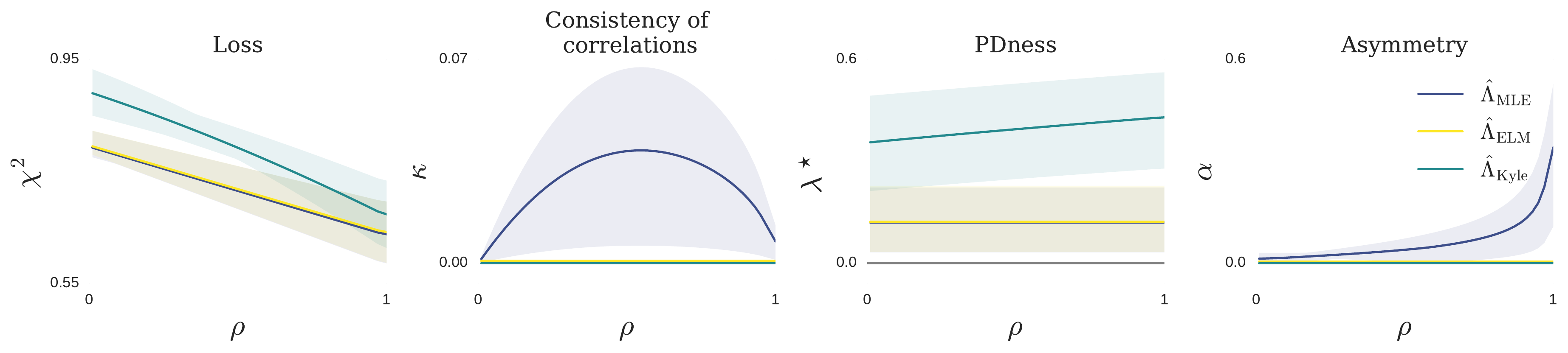}
 \caption{ \textbf{Effect of price variation correlations} Averaged values (solid lines) and intervals between the maximum and the minimum (shaded area) of the observables for $\rho$ ranging from 0 to 1.}\label{fig:correlations}
\end{figure}

\paragraph{Extreme correlations}
We now look at a linear combinations of two of the assets with price variation covariances: 
\[\hatsig\mapsto\hatsig_\rho = \begin{pmatrix}
             1 & \rho\\
	    \rho & 1
            \end{pmatrix}\ .
\]

In Fig.~\ref{fig:correlations} we show the value of each of the observables averaged over the 6 sets of covariances for $\epsilon$ ranging from $0$ to $1$. The loss of all estimators decreases as $\rho\to1$ and as in the previous case, the distributions have a strong overlap, suggesting again that no one estimator is significantly better at minimizing the loss. \\

By construction $\kappa_{\rm Kyle=0}$, and in this case $\kappa_{\rm ELM} = 0$ because $\hatomd_\rho$ commutes with $\hatsig_\rho$ (it would not be the case if the volume variances were not the same). Also $\kappa_{\rm MLE} = 0$ for $\rho = 0$ because in that case $\hatsig_\rho = \id$, but for $\rho > 0$, $\kappa_{\rm MLE} > 0$.
The distributions of $\lambda^\star_{\rm ELM}$ and $\lambda^\star_{\rm MLE}$ are essentially the same. $\lambda^\star_{\rm Kyle}$ is significantly larger than the other two for all values of $\rho$.
Asymmetries in $\hat\lam_{\rm MLE}$ are amplified as $\rho\to1$ because in that limit $\hatomd_\rho$ becomes singular.\\

Again, following example \ref{ex:extreme_correlation} in Fig. \ref{fig:rankone} we show the covariance matrices and the three corresponding impact estimators for a case where the price variations of the two assets are strongly correlated. Let us analyze the estimators obtained:
\begin{itemize}
 \item $\hat\lam_{\rm MLE}$: Small arbitrary fluctuations in $\hatrespd_\rho$ are amplified due to the strong volume correlations that result from the transformation. In the example shown in the figure the impact on the second asset is much larger than on the first one.
 \item $\hat\lam_{\rm ELM}$: Self impact and cross-impact are comparable. Furthermore, self-impact is very similar for both assets.
\item $\hat\lam_{\rm Kyle}$: Self-impact is very similar for both assets and stronger than cross-impact.
\end{itemize}

\section{Conclusion}

In the present work we studied a Multivariate Kyle model, that proved to be a very interesting setting to understand the fundamental mechanisms underlying {\it cross-impact}. Perhaps more importantly, the Multivariate Kyle model suggests a practical recipe to extract a consistent cross-impact matrix structure from empirical data -- a point that does not seem to have been emphasized before but that becomes crucial when dealing with present day large dimensional data. We recovered the Caball\'e-Krishnan solution at equilibrium and proved the unicity of the symmetric solution. We provided an interpretation of the results with regard to the eigen-modes of returns and volumes covariances. We discussed the implications of the model for pricing, with a particular focus on the SPD property of the impact matrix. We presented the implications for cross-impact regression of empirical data, and showed that cross-awareness (or in lesser terms ``I know that you know") can be used as powerful regulariser with small loss in predictive power. We confronted our results with previous empirical cross-impact analyses, and identified limiting regimes in which the results in~\cite{benzaquen2017dissecting,mastromatteo2017trading} are reproduced.

From a complementary point of view, our results can be seen as proxies for the behavior of an idealized market in which prices fully reflect the information encoded in the order flow: for a manipulation-free market, prices of liquid securities should be insensitive to trading of strongly illiquid instruments
(see example~\ref{ex:extreme_illiquidity}), whereas prices of strongly correlated instruments should be insensitive to how they are individually traded (example~\ref{ex:extreme_correlation}). Measuring how much real markets violate these principles would be an interesting empirical application of our results, that could be used to assist regulators in order to assess the vulnerability of a market to correlated trading activity.

In spite of all the ``good'' properties of the Kyle cross-impact estimator, one should keep in mind that many important aspects are left out of the Kyle framework, that may play a crucial role in practice. First of all, the empirical order flow is strongly autocorrelated in time, which must induce a non-trivial lag dependence of the impact function -- as found in \cite{benzaquen2017dissecting}. Extending the present theory to predict a lag-dependent impact matrix $\lam(\tau)$ would be extremely useful. Second, impact is non-linear in traded quantities: it is now well accepted that the impact of a metaorder has a square-root dependence on volume. How this single asset square-root law generalizes in a multiasset context is, as far as we are aware, a completely open problem. Finally, it would be interesting to extend to the multivariate case some recent extensions of the Kyle model that accounts for the inventory risk of the market maker \cite{Cetin}. We hope to visit some of these questions in future work.

\begin{figure}[t!]
\centering
 \includegraphics[width=0.7\textwidth]{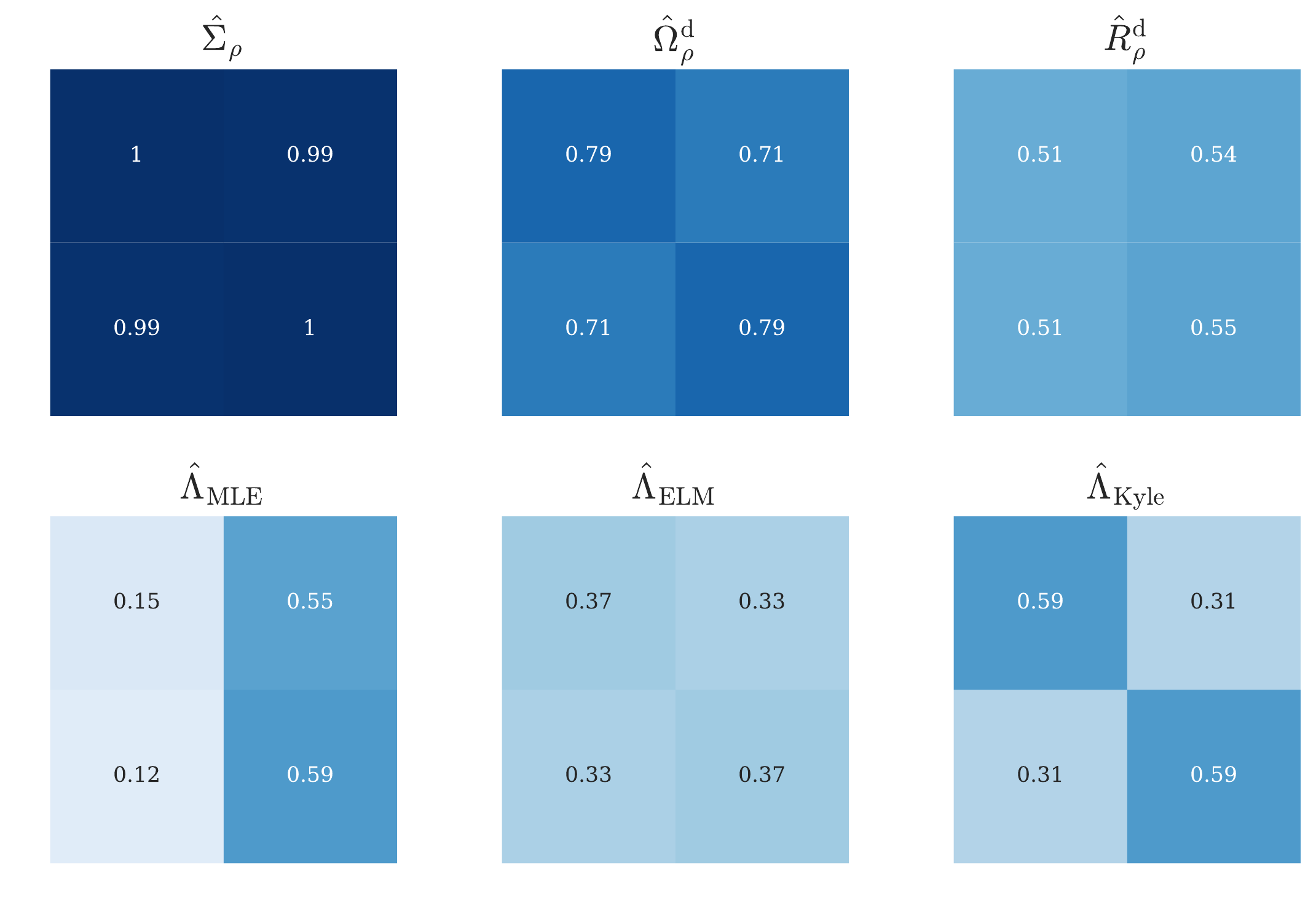}
 \caption{\textbf{Strong price variation correlations.} Covariance matrices and estimators with strongly correlated price variations ($\rho=0.99$)}\label{fig:rankone}
\end{figure}

\section*{Acknowledgments }

We thank Z. Eisler, A. Fosset and E. S\'eri\'e  for very fruitful discussions. We also thank J. Caball\'e for kindly providing the unpublished manuscript \cite{caballe1989insider}.


\clearpage

\bibliographystyle{alpha}
\bibliography{kyle}

\clearpage

\appendix
\renewcommand*{\thesection}{\Alph{section}}
\section{Solution of the Multivariate Kyle model}
\label{app:solution}

The proof of Theorem \ref{th:main_theorem} is split in 4 different results. First, the fact that the equilibrium is linear is a consequence of Propositions \ref{th:IT_strategy} and \ref{prop:MM_strategy}. Before finding the explicit form of the equilibrium we need to show that $\lam$ is symmetric. This is a consequence of the second order condition on the minimization of the IT's strategy (Proposition \ref{th:IT_strategy}) and is proven in Proposition \ref{prop:symmetry}. Finally, the explicit form and proof of uniqueness of $\lam$ is given in Proposition \ref{prop:unique_sym}

\subsection{Optimality of the Informed Trader}
\label{app:it_solution}
\noindent\emph{Proof of Proposition~\ref{th:IT_strategy}. }
Here, we show the consequences of the first and second order conditions arising from the requirement that the IT is maximizing locally the average utility given $\bpf$, which can be written as:
\begin{align*}
\avg[\util_{IT} | \bpf] &= -\avg[\bq^\top (\bp - \bpf) | \bpf]\\
&= -\avg[\bq^\top (\bmu + \lam\bu - \bpf) | \bpf]\\
&= -\bq^\top \bmu - \bq^\top\lam\bq + \bq^\top\bpf.
\end{align*}
In order to find the optimal strategy, we maximize the utility with respect to $\bq$. The first and seccond order derivatives are:
\begin{align}
 \label{eq:firstorder}\frac{\partial \avg[\util_{IT}|\bpf]}{\partial \bq} &= - \bmu - 2\lam_{\rm S}\bq  + \bpf = 0\\
 \label{eq:secondorder}\frac{\partial^2 \avg[\util_{IT}|\bpf]}{\partial \bq \partial \bq^\top} &= - 2\lam_{\rm S}.
\end{align}
where $\bm X_{\rm S}$ ($\bm X_{\rm A}$) denotes the symmetric (antisymmetric) part of $\bm X$.\\
Since $\util_{\rm IT}$ is quadratic on $\bq$, the profit maximization condition implies that the second derivative \eqref{eq:secondorder} has to be strictly negative definite, which in turn implies that both $\lam$ and $\lam_{\rm S}$ have to be PD.
Solving \eqref{eq:firstorder} for $\bq$ leads to
\begin{equation}
 \label{eq:q_solution_app}
 \bq = \ba + \bB\bpf
\end{equation}
where
\begin{equation}\label{eq:a}
\begin{aligned}
\ba &= -\bB\bmu \\
\bB &= \frac12\bm\Lambda_{\rm S}^{-1}.
\end{aligned}
\end{equation}
Since $\lam_{\rm S}$ has to be PD, it is invertible and $\bB$ and $\ba$ are well defined.
\qed\\

As we will show in App.~\ref{app:sym_lin_eq}, PDness of $\bm \Lambda$ together with the efficient-price condition (discussed in App.~\ref{app:mm_solution}) implies that $\bm\Lambda$ is symmetric. 

\subsection{Optimality of the Market Maker}
\label{app:mm_solution}
In this Appendix we will state the conditions stemming from the requirement that the MM fixes a pricing rule $\bp = \lam \bu + \bmu$ such that $\bp = \avg[\bpf | \bu]$ is an efficient price.\\

\noindent\emph{Proof of Proposition~\ref{prop:MM_strategy}. }
Exploiting the Gaussian nature of $\bpf$ and $\bv$ and  their independence, the efficient price as given in \eqref{eq:MMcondition} can be expanded as \cite[p.~269]{lindgren2013stationary}:
\begin{align}
 \nonumber\bp = \avg[\bpf|\bu] &= \avg[\bpf] + \cov[\bpf,\bu]\cov[\bu,\bu]^{-1}(\bu - \avg[\bu])\\
 \nonumber& = \bp_0 + \respinfd (\omd)^{-1}(\bu - \bu_0)\\
\label{eq:MMstrategy}& = \bmu + \bm \Lambda \bu \, ,
\end{align}
where
\begin{align}
\label{eq:mu_app} \bmu &= \bp_0 - \lam \bu_0 \, , \\
\label{eq:lambda_app}\bm\Lambda & = \respinfd (\omd)^{-1}.
\end{align}
\hfill\qed\\

This analysis provides an explicit form for the MM's strategy and it might look as if the MM is doing a linear regression. However, without knowing the strategy of the IT one cannot estimate $\respinfd$ nor $(\omd)^{-1}$ meaning that the MM has to include information about $\bq$ in his regression making it a quadratic problem rather than a linear one. This point is discussed in more detail in Sec.~\ref{sec:linear_regression}.

\subsection{Symmetry of the linear equilibrium}
\label{app:sym_lin_eq}

In order to find the linear equilibrium described in Theorem~\ref{th:main_theorem}, we first have to prove the proposition \ref{prop:symmetry}. To do so we will make use of an useful lemma \cite[Theorem 7.2.6]{horn1985matrix} that will be used in the following derivations.

\begin{lemma}[Positive definite square-root]
  \label{lemma:pos_def_sqrt}
  Consider a matrix $\bm Y$ that is symmetric and positive semi-definite. Then, it exists a unique symmetric, positive semi-definite matrix $\bm X = \sqrt{\bm Y}$ such that $\bm X \bm X = \bm Y$.
\end{lemma}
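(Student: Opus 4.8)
The plan is to treat existence and uniqueness separately, leaning entirely on the spectral theorem for the symmetric matrix $\bm Y$. Since $\bm Y$ is symmetric it admits an orthonormal eigendecomposition $\bm Y = \bm Q \diag(\mu_1,\dots,\mu_n)\bm Q^\top$ with $\bm Q$ orthogonal, and positive semi-definiteness guarantees $\mu_i \geq 0$ for every $i$.

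For existence, I would simply exhibit the candidate $\bm X = \bm Q \diag(\sqrt{\mu_1},\dots,\sqrt{\mu_n})\bm Q^\top$. This matrix is symmetric by construction and positive semi-definite because its eigenvalues $\sqrt{\mu_i}$ are nonnegative, and a direct computation gives $\bm X\bm X = \bm Q \diag(\mu_1,\dots,\mu_n)\bm Q^\top = \bm Y$ using $\bm Q^\top \bm Q = \id$. This settles the existence half with no difficulty.

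For uniqueness---the genuinely substantive part---I would take an arbitrary symmetric positive semi-definite $\bm X$ satisfying $\bm X\bm X = \bm Y$ and show it must coincide with the candidate above. Since $\bm X$ is symmetric it has its own orthonormal eigenbasis $v_1,\dots,v_n$ with eigenvalues $s_1,\dots,s_n$, and positive semi-definiteness forces $s_i \geq 0$. The key observation is that the same vectors are eigenvectors of $\bm Y = \bm X^2$, with $\bm Y v_i = s_i^2 v_i$; hence the eigenvalues of $\bm Y$ are exactly the numbers $s_i^2$. Fixing an eigenvalue $\mu$ of $\bm Y$, its eigenspace is spanned precisely by those $v_i$ with $s_i^2 = \mu$, and because $s_i \geq 0$ the relation $s_i^2 = \mu$ pins down $s_i = \sqrt{\mu}$ uniquely. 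Therefore $\bm X$ acts as multiplication by $\sqrt{\mu}$ on each eigenspace of $\bm Y$; since these eigenspaces span the whole space, $\bm X$ is completely determined by $\bm Y$ and must equal the candidate constructed for existence.

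The main obstacle is the uniqueness argument, and specifically the point where nonnegativity of the $s_i$ is invoked: without the positive semi-definiteness constraint on $\bm X$ the equation $s_i^2 = \mu$ would admit the two roots $\pm\sqrt{\mu}$, producing many distinct symmetric square roots (as the paper notes explicitly in the one-dimensional degeneracy example, where $\lambda = \pm\tfrac12\sqrt{\sigma_0/\omega}$). I would therefore emphasize that it is exactly the sign constraint that collapses the choice of root on every eigenspace and yields unicity, mirroring the role that positive-definiteness of $\lam_{\rm S}$ plays in selecting the equilibrium elsewhere in the paper.
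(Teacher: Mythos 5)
Your proof is correct, and it is worth noting that the paper itself does not prove this lemma at all: it simply cites \cite[Theorem 7.2.6]{horn1985matrix}, so your argument supplies details the paper delegates to a reference. Your route is the standard spectral one: existence by taking nonnegative square roots of the eigenvalues in an orthonormal eigendecomposition, and uniqueness by observing that any symmetric positive semi-definite $\bm X$ with $\bm X\bm X = \bm Y$ must preserve each eigenspace of $\bm Y$ and act on the $\mu$-eigenspace as multiplication by the unique nonnegative root $\sqrt{\mu}$. The only step you state without justification is that the $\mu$-eigenspace of $\bm Y$ is \emph{exactly} the span of those $v_i$ with $s_i^2 = \mu$; this follows in one line by expanding an arbitrary $\mu$-eigenvector of $\bm Y$ in the basis $\{v_i\}$ and comparing coefficients, and you should include that line for completeness. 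By comparison, the textbook proof behind the paper's citation establishes uniqueness by exhibiting the square root as a polynomial in $\bm Y$, which buys the additional fact that $\sqrt{\bm Y}$ commutes with every matrix commuting with $\bm Y$; your eigenspace argument is more elementary and is fully sufficient for every use the paper makes of the lemma, including the PD selection of the root in Proposition \ref{prop:symmetry} and the identity $\sqrt{\bO^\top \bm Y \bO} = \bO^\top \sqrt{\bm Y}\, \bO$ invoked in the uniqueness part of Proposition \ref{prop:unique_sym}, since that identity is an immediate consequence of uniqueness itself. Your closing remark correctly identifies the sign constraint as the mechanism that collapses the $\pm\sqrt{\mu}$ degeneracy, exactly mirroring the paper's discussion of the one-dimensional case $\lambda = \pm\tfrac12\sqrt{\sigma_0/\omega}$ and the role of positive-definiteness in ruling out saddle-point solutions.
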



\noindent\emph{Proof of Proposition~\ref{prop:symmetry}.} Let us start by plugging $\avg[\bu] = \bu_0=\frac12\bm\lam_{\rm S}^{-1}(\bp_0 - \bmu)$ into \eqref{eq:MM_params}. After a bit of algebra we obtain
\begin{equation}
 (\bp_0 - \bmu)(\id - \frac12\bm\Lambda\bm\Lambda_{\rm S}^{-1}) = 0
\end{equation}
thus $\bmu = \bp_0$ and $\bq = \frac12\bm\Lambda_{\rm S}^{-1}(\bpf - \bp_0)$ which allows us to compute the following quantities:
\begin{align}
 \bu_0 &=0\\
\omd & = \frac14\bm\Lambda_{\rm S}^{-1}\siginf\bm\Lambda_{\rm S}^{-1} + \om\\
\respinfd & = \frac12\siginf\bm\Lambda_{\rm S}^{-1}.
\end{align}
By plugging the results above into \eqref{eq:MM_params} one obtains a quadratic equation:
\begin{equation}
(\bm\Lambda\bm\Lambda_{\rm S}^{-1} - 2\id)\frac14\siginf + \bm\Lambda\om \bm\Lambda_{\rm S} = 0.\label{eq:lambdafirst}
\end{equation}
Expanding $\bm\Lambda$ as $(\bm\Lambda_{\rm A} + \bm\Lambda_{\rm S})$ and doing some algebra from \eqref{eq:lambdafirst} we obtain
\begin{equation}\label{eq:lambda_a}
 \bm\Lambda_{\rm A} = (\id - \bm\Gamma)(\id + \bm\Gamma)^{-1}\bm\Lambda_{\rm S}
\end{equation}
where $\bm\Gamma = 4\bm\Lambda_{\rm S}\om \bm\Lambda_{\rm S}\siginf^{-1}$. Now, since \comment{$\bm\Lambda_{\rm A} = \frac12(\bm \Lambda - \bm\Lambda^\top)$, we have $\bm\Lambda_{\rm A} = -\bm\Lambda_{\rm A}^\top$ and therefore}
\begin{align*}
 (\id - \bm\Gamma)(\id + \bm\Gamma)^{-1}\bm\Lambda_{\rm S} &= -\bm\Lambda_{\rm S}(\id + \bm\Gamma^\top)^{-1}(\id - \bm\Gamma^\top)\\
(\id + \bm\Gamma^\top)\bm\Lambda_{\rm S}^{-1}(\id - \bm\Gamma) &= -(\id - \bm\Gamma^\top)\bm\Lambda_{\rm S}^{-1}(\id + \bm\Gamma)\\
\sqrt{\bm\Lambda_{\rm S}}\bm\Gamma^\top\bm\Lambda_{\rm S}^{-1}\bm\Gamma\sqrt{\bm\Lambda_{\rm S}}&=\id.
\end{align*}
Substituting back $\bm\Gamma$ we get
\begin{equation*}
\bm Y^2 = \bm Z^2
\end{equation*}
where $\bm Y = \sqrt{ \bm\Lambda_{\rm S}} \om\sqrt{\bm\Lambda_{\rm S}}$ and $\bm Z = \sqrt{\bm\Lambda_{\rm S}^{-1}} \frac14\siginf\sqrt{\bm\Lambda_{\rm S}^{-1}}$. Since  $\bm\Lambda_{\rm S}$ is PD, $\bm Y^2$ is also SPD and therefore $\bm Y=\bm Z$ is the unique PD square root of $\bm Y^2$ which leads to
\begin{equation}
 \frac14\siginf = \bm \Lambda_{\rm S}  \om \bm \Lambda_{\rm S}\label{eq:lambdaS}
\end{equation}
and substituting these results into \eqref{eq:lambda_a} reveals that $\lam_{\rm A} = 0$. \qed\\

With equation \eqref{eq:lambdaS} we prove that the profit maximization condition on the IT's cost (i.e. $\bm\Lambda$ has to be PD) results in $\bm\Lambda_{\rm A} = 0$ and $\bm\Lambda = \bm\Lambda_{\rm S}$. If $\bm\Lambda_{\rm S}$ was not PD, $\bm Y$ and $\bm Z$ could be different roots and we would not be able to establish the relationship $\bm Y = \bm Z$. In that case we would find many solutions, most of them being saddle points for $\avg[\util_{IT}]$.

\subsection{Explicit form of the linear equilibrium}
\label{app:explicit_linear_equil}

The last step needed in order to prove \comment{the existence and uniqueness} the linear equilibrium in Theorem~\ref{th:main_theorem} is to prove that Eq.~\eqref{eq:lambdaS} admits a unique symmetric solution, as made explicit below.
\begin{proposition}
\label{prop:unique_sym}
The unique symmetric PD matrix $\lam$ satisfying Eq.~\eqref{eq:lambdaS} is
\begin{equation*}
 \bm \Lambda = \frac12\omright^{-1} \sqrt{\omright \siginf \omleft}\omleft^{-1}\, .
\end{equation*}
\end{proposition}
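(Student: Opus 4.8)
The plan is to strip the awkward middle factor $\om$ out of the quadratic equation $\frac14\siginf = \lam\om\lam$ by a symmetric congruence, reducing the problem to taking an ordinary positive-definite square root, to which Lemma~\ref{lemma:pos_def_sqrt} applies directly. First I would fix the factorization $\om = \omleft\omright$ with $\omleft = \omright^\top$, guaranteed by the SPD-ness of $\om$ (for instance its symmetric square root), so that $\omright$ is invertible and $\om = \omright^\top\omright$. Then I would introduce the change of variables $\bm N := \omright\lam\omright^\top = \omright\lam\omleft$. Because $\omright$ is invertible, $\lam\mapsto\bm N$ is a congruence, so $\bm N$ is symmetric iff $\lam$ is, and $\bm N$ is PD iff $\lam$ is. This sets up a bijection between symmetric PD matrices $\lam$ and symmetric PD matrices $\bm N$, so it suffices to solve the transformed equation for $\bm N$.

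Second, I would substitute $\om = \omleft\omright$ into $\lam\om\lam = \frac14\siginf$ and conjugate by $\omright$ on the left and $\omleft$ on the right: since $\omleft\omright = \om$, the two inner factors collapse and one finds
\[
 \bm N^2 = \omright\lam\omleft\,\omright\lam\omleft = \omright\,\lam\om\lam\,\omleft = \tfrac14\,\omright\siginf\omleft .
\]
The right-hand side equals $\tfrac14\,\omright\siginf\omright^\top$, a congruence of the SPD matrix $\siginf$, hence itself SPD. Now Lemma~\ref{lemma:pos_def_sqrt} applies: there is a unique symmetric PD matrix whose square is $\tfrac14\,\omright\siginf\omleft$, namely $\bm N = \tfrac12\sqrt{\omright\siginf\omleft}$ (the factor $\tfrac12$ being legitimate because the PD square root of $\tfrac14\bm A$ is $\tfrac12$ times that of $\bm A$, again by uniqueness of the PD root).

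Finally, I would invert the congruence, $\lam = \omright^{-1}\bm N\,\omleft^{-1} = \tfrac12\,\omright^{-1}\sqrt{\omright\siginf\omleft}\,\omleft^{-1}$, which is exactly the claimed expression; uniqueness of $\lam$ then follows from the bijection together with the uniqueness clause of the lemma. I expect the only genuinely delicate point to be the handling of the non-commutativity of $\lam$ and $\om$: the whole argument hinges on choosing the \emph{symmetric} factorization $\omleft = \omright^\top$ so that the congruence $\lam\mapsto\omright\lam\omright^\top$ preserves both symmetry and positive definiteness while simultaneously turning $\lam\om\lam$ into the perfect square $\bm N^2$. A minor bookkeeping point I would verify is that the resulting $\lam$ does not depend on which such factorization is chosen (two symmetric factorizations differ by $\omright'=\bm O\omright$ with $\bm O$ orthogonal, and since $\sqrt{\bm O\bm A\bm O^\top}=\bm O\sqrt{\bm A}\bm O^\top$, the $\bm O$'s cancel and yield the same $\lam$), and that $\sqrt{\cdot}$ is used consistently with the lemma's convention.
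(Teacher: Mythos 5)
Your proof is correct, and it takes a genuinely different (and for this proposition, more streamlined) route than the paper's. You conjugate the quadratic equation by the fixed symmetric factorization, setting $\bm N := \omright\lam\omleft$, note that this congruence preserves symmetry and positive definiteness in both directions, and thereby reduce Eq.~\eqref{eq:lambdaS} to $\bm N^2 = \tfrac14\,\omright\siginf\omleft$, which Lemma~\ref{lemma:pos_def_sqrt} solves uniquely among symmetric PD matrices; uniqueness of $\lam$ is then immediate from the bijection. The paper instead parametrizes the full solution set as $\lam = \tfrac12\siginfleft\bO\omleft^{-1}$, where $\siginfleft\siginfright$ is a factorization of $\siginf$ and $\bO$ ranges over the orthogonal group, and pins $\bO$ down by self-consistently imposing symmetry of $\lam$, arriving at $\bO = (\omright\siginfleft)^{-1}\sqrt{(\omright\siginfleft)(\omright\siginfleft)^\top}$, in effect the orthogonal factor of a polar decomposition of $\omright\siginfleft$. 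Your route buys a shorter and more airtight uniqueness argument: there is no need to factorize $\siginf$ at all, nor to argue separately that symmetry plus PDness determines $\bO$. What the paper's route buys is the explicit rotation $\bO$ of Eq.~\eqref{eq:rotation_simple}, which is precisely the object exploited in the whitening interpretation of Section~\ref{sec:interpretation}, so the detour earns its keep elsewhere in the text. Your final bookkeeping check on the factorization ambiguity coincides with the paper's uniqueness discussion: the residual freedom is an orthogonal factor, absorbed via $\sqrt{\bO^\top\bm Y\bO} = \bO^\top\sqrt{\bm Y}\,\bO$. One small point worth making explicit: the model setup states only that $\om$ is invertible, but as the covariance of a nondegenerate Gaussian it is SPD, so your appeal to a symmetric factorization $\om = \omright^\top\omright$ is legitimate. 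Incidentally, your congruence trick is the same device the paper deploys in the proof of Proposition~\ref{prop:symmetry} (conjugation by $\sqrt{\lam_{\rm S}}$ to reduce $\bm Y^2 = \bm Z^2$ to $\bm Y = \bm Z$); you have simply applied it once more where the paper switches to the orthogonal-matrix parametrization.
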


\begin{proof}
\emph{Explicit form of $\lam$:} The solution to Eq.~\eqref{eq:lambdaS} has to be a $\lam$ of the form
\begin{equation}
    \label{eq:change_var_lam_app}
    \lam = \frac 1 2 \siginfleft \bO \omleft^{-1} \, ,
\end{equation}
where $\siginfleft$ is any factorization of $\siginf$ of the form $\siginfleft \siginfright$ with $\siginfleft^\top = \siginfright$. One is then left with the equation $\bO \bO^\top = \id$.
Such equation is solved by \emph{any} matrix $\bO$ belonging to the orthogonal group $O(n)$, which is specified by $n(n-1)/2$ parameters. It is only by self-consistently imposing symmetry of $\lam$ that one finds the solution for $\bO$
\begin{eqnarray}
    \bO &=& \siginfleft^{-1} \omright^{-1} \sqrt{\omright \siginf \omleft } \nonumber \\
        &=& (\omright \siginfleft)^{-1} \sqrt{(\omright \siginfleft)(\omright \siginfleft)^\top}.
        \label{eq:overlap}
\end{eqnarray}
Introducing this value of $\bO$ in Eq.~\eqref{eq:change_var_lam_app} we obtain the desired result.\\
\emph{Uniqueness:} The decomposition $\om= \omleft\omright$ is not unique. Indeed, multiplying $\omleft$ by an arbitrary rotation $\bO_\Omega$ yields another possible decomposition. However the final value of $\lam$ does not depend on $\bO_\Omega$. To prove this consider the value of $\lam$ using the decomposition $\om = \omleft\bO_\Omega\bO_\Omega^\top\omright$:
\begin{align*}
 \lam &= \frac12(\bO_\Omega^\top\omright)^{-1} \sqrt{\bO_\Omega^\top\omright \siginf \omleft\bO_\Omega}(\omleft\bO_\Omega)^{-1}\\
& = \frac12\omright^{-1}\bO_\Omega \bO_\Omega^\top\sqrt{\omright \siginf \omleft}\bO_\Omega\bO_\Omega^\top\omleft^{-1}\\
& = \frac12\omright^{-1} \sqrt{\omright \siginf \omleft}\omleft^{-1}
\end{align*}
where we have used the fact that if $\sqrt{\bm Y} = \bm X $ then $\sqrt{\bO^\top\bm Y\bO} = \bO^\top\bm X\bO$. Since the argument of the square root is SPD by construction the unique value of $\lam$ is obtained by choosing the unique SPD root.

\end{proof}

\section{Estimation}
\label{app:estimation}
\subsection{The loss $\chi^2$}
First, we want to relate the loss~\eqref{eq:chisq} with the empirical mean and covariances defined in Sec.~\ref{sec:implications}.
Starting from the definition~\eqref{eq:chisq}, one can write
\begin{align*}
  \chi^2
  =& \frac 1 2 \langle (\hat \lam \bu - \Delta\bp)^\top \bm M
      (\hat \lam \bu - \Delta\bp) \rangle \, , \\
  =& \frac 1 2 \langle \bu^\top \hat \lam^\top \bm M \hat \lam \bu \rangle - \langle \bu^\top \hat \lam^\top \bm M \Delta\bp  \rangle+ \frac 1 2 \langle \Delta\bp ^\top \bm M \Delta\bp \rangle \, 
\end{align*}
Then, by plugging the definitions of the empirical covariances one is left with
\begin{equation}
 \label{eq:app_loss_explicit}
  \chi^2
  = \frac 1 2 \tr \left[ \hat \lam^\top \bm M \hat \lam \hatomd - 2\hat \lam^\top \bm M  \hatrespinfd  + \bm M\hatsig\right]
\end{equation}

\subsection{Maximum Likelihood Estimator}
\noindent\emph{Proof of Proposition \ref{prop:chi2}.} In order to derive the Maximum Likelihood estimator, we differentiate \eqref{eq:app_loss_explicit} ith respect to  $\hat\lam$:
 \begin{equation*} 
   0 = \frac{\partial \chi^2}{\partial\hat\lam}
   =\bm M (\hat\lam\hatomd - \hatrespinfd).
 \end{equation*}
It is clear that the optimal value of $\hat\lam_{\rm MLE}$ does no depend on $\bm M$. To find the explicit expression we equate the derivative to 0 and solve:
\[  \hat\lam_{\rm MLE} = \hatrespd(\hatomd)^{-1}.\]
To calculate the minimum we can plug this solution into Eq.~\eqref{eq:app_loss_explicit}, obtaining
\begin{equation}
  \chi^2_{\rm MLE} = \frac 1 2 \tr \left[ \bm M \left( \hatsig - \hatrespinfd  (\hatomd)^{-1} (\hatrespinfd)^\top\right) \right] \, .
\end{equation}
\qed

\subsection{EigenLiquidity Estimator}
The derivation of the EigenLiquidity Estimator runs along the same lines as the previous case.\\

\noindent\emph{Proof of Proposition \ref{prop:chi2ELM}.} We want to find an estimator $\hat\lam_{\rm ELM}$ that minimizes $\chi^2$ under the constraint  $[\hat\lam_{\rm ELM},\hatsig]=0$ which is equivalent to saying that  
\[\hat\lam_{\rm ELM}=\sum_{a=1}^n\hat\bs_ag_a\hat\bs_a^\top\]
where $\{\hat\bs_a\}_{a=1}^n$ are the eigenvectors of $\hatsig$. Because of this constraint, the minimization of $\chi^2$ has to be done with respect to $g_a$ rather than $\hat\lam$:
 \begin{equation*}
   0 = \frac{\partial \chi^2}{\partial g_a}
   =\tr\left(\bm M (\hat\lam\hatomd - \hatrespinfd)\hat\bs_a\hat\bs_a^\top\right)\, ,
 \end{equation*}
which for arbitrary $\bm M$ implies
\[ \hat\bs_a^\top(\hat\lam\hatomd-\hatrespinfd)\hat\bs_a = 0\]
and therefore
\[g_a = \frac{\hat\bs_a^\top\hatrespinfd\hat\bs_a}{\hat\bs_a^\top\hatomd\hat\bs_a}\,.\]

In order to compute the minimum loss we proceed as in the previous case plugging the value of $\hat\lam_{\rm ELM}$ into equation \eqref{eq:app_loss_explicit} and obtain
\begin{align*}
 \chi_{\rm ELM}^2 &= \frac12\tr\left[M\left(\hatsig + (\hat\lam_{\rm ELM}\hatomd - 2\hatrespinfd)\hat\lam_{\rm ELM}\right)\right]\\
&=\frac12\tr\left[M\left(\hatsig -\hatrespinfd\hat\lam_{\rm ELM}\right)\right]\\
& = \frac12\tr\left[M\left(\hatsig -\hatrespinfd\sum_{a=1}^n \hat\bs_a \frac{\hat\bs_a^\top\hatrespinfd\hat\bs_a}{\hat\bs_a^\top\hatomd\hat\bs_a}\hat\bs_a^\top\right)\right]\, .
\end{align*}
\qed

\subsection{Kyle Estimator}

\noindent\emph{Proof of Proposition \ref{prop:chi2Kyle}}
To prove the first statement of the proposition it suffices to compute the covariance of the efficient prices:
 \begin{align*}
   \hat\sig_{\rm Kyle}
= & \langle \hat \Delta \bp \hat \Delta \bp^\top\rangle = \hat \lam_{\rm Kyle}\langle\bu\bu^\top\rangle\hat \lam_{\rm Kyle}^\top = \hat \lam_{\rm Kyle}\hatomd\hat \lam_{\rm Kyle}^\top = k^2\hatsig\, ,
 \end{align*}
 and to use the unicity result for the above equation (proved in App.~\ref{app:explicit_linear_equil})  in order to recover Eq.~\eqref{eq:kyle_est}.\\

In order to find the loss at the minimum one has to take into account that the only free parameter is $k$, and since $\hat\lam_{\rm Kyle}\hatomd\hat\lam_{\rm Kyle} = k^2\hatsig$, the loss is 
\begin{align*}
 \chi_{\rm Kyle}^2 &= \frac12\tr\left[M\left(\hatsig + (\hat\lam_{\rm ELM}\hatomd - 2\hatrespinfd)\hat\lam_{\rm Kyle}\right)\right]\\
&=\frac12\tr\left[M\left((1 + k^2)\hatsig -2k\hatrespinfd   (\hat\omright^{\rm d})^{-1} \sqrt{\hat \omright^{\rm d} \hatsig \hat \omleft^{\rm d}}  (\hat\omleft^{\rm d})^{-1}       \right)\right]\, .
\end{align*}
Optimizing with respect to $k$ we obtain
\[\frac{\partial \chi_{\rm Kyle}^2}{\partial k} = k\tr\left[\bm M\hatsig\right] - \tr\left[\bm M \hatrespinfd  (\hat\omright^{\rm d})^{-1} \sqrt{\hat \omright^{\rm d} \hatsig \hat \omleft^{\rm d}}  (\hat\omleft^{\rm d})^{-1}   \right] = 0\ .\]
\qed\\


\subsection{Fabricating synthetic covariance matrices inspired from real data}\label{sec:fabricating}
The data set used to fabricate the synthetic covariance matrices in section \ref{sec:empirical} consists of the averaged price variations and volumes in 5 minute bins of the 2-year, 5-year, 10-year, and 30-year tenors of the U.S. Treasury Futures traded in the Chicago Board of Trade during the year 2016. 


For each pair of bonds with price variations $\Delta \bp^{(t)}$ and volumes $\bu^{(t)}$ we define the normalized covariances as
\[\bm C = \begin{pmatrix}
\hatsig & (\hatrespd)^\top\\
\hatrespinfd & \hatomd                                                                                                                                                                                                                                                                                                                                                                                                                                                                                                                                                                                                                                                                                                             \end{pmatrix}
  = \bm D^{-1}
 \begin{pmatrix}
\langle \Delta \bp\Delta\bp^\top \rangle  & \langle \bu\Delta\bp^\top \rangle \\
\langle \Delta\bp\bu^\top \rangle & \langle \bu\bu^\top \rangle
\end{pmatrix} \bm D^{-1}
 \]
where we have previously shifted the data by their empirical means and where
\[\bm D = \diag\left(\sqrt{\langle \Delta p_1^2\rangle},\sqrt{\langle \Delta p_2^2\rangle},\sqrt{\langle y_1^2\rangle}, \sqrt{\langle y_2^2\rangle}\right).\]
\paragraph{Fixing liquidity:}
Covariance matrices with the desired liquidity are constructed by rescaling the rows and columns of $\bm C$
\[\bm C_\epsilon = \begin{pmatrix}
\hatsig_\epsilon & (\hatrespd_\epsilon)^\top\\
\hatrespd_\epsilon & \hatomd_\epsilon                                                                                                                                                                                                                                                                                                                                                                                                                                                                                                                                                                                                                                                                                                             \end{pmatrix} =
\begin{pmatrix} 1&&& 0\\&1&&\\&&1&\\0&&&\sqrt{\epsilon}\end{pmatrix}\bm C \begin{pmatrix} 1&&& 0\\&1&&\\&&1&\\0&&&\sqrt{\epsilon}\end{pmatrix}.
\]
Using this recipe, the variance of the volume of the second asset will be $\epsilon$ and consequently the second column of the response is multiplied by $\sqrt{\epsilon}$. Note that the price variations are unchanged.

\paragraph{Fixing correlations:} In order to modify the price correlations we have to construct lineal combinations of assets. If $\hatsig_{12} = r$ , then
\[\bm C_\rho = \begin{pmatrix}
\hatsig_\rho & (\hatrespd_\rho)^\top\\
\hatrespd_\rho & \hatomd_\rho                                                                                                                                                                                                                                                                                                                                                                                                                                                                                                                                                                                                                                                                                                             \end{pmatrix} =
\begin{pmatrix} \bm A & 0\\0&\bm A\end{pmatrix}\bm C \begin{pmatrix} \bm A & 0\\0&\bm A\end{pmatrix}
\]
where
\[\bm A = \begin{pmatrix} 1 & 1\\1& - 1\end{pmatrix} \begin{pmatrix} \sqrt{\frac{1 + \rho}{1 + r}} & 0\\0& \sqrt{\frac{1 - \rho}{1 - r}}\end{pmatrix} \begin{pmatrix} 1 & 1\\1& - 1\end{pmatrix}.\]
This modification ensures that $\hatsig_\rho = \begin{pmatrix} 1 & \rho\\\rho&  1\end{pmatrix} $. The response and volumes are also modified accordignly. In particular the diagonal entries of $\hatomd_\rho$ will still be equal (but not necessarily equal to 1).\\

Note that for $\bm C$ SPD, both $\bm C_\epsilon$ and $\bm C_\rho$ are also SPD and therefore are coherent covariance matrices.

\end{document}